\newcommand{\Exp}[1]{{\text{E}}[ \ensuremath{ #1 } ]  }
\newcommand{\Var}[1]{{\text{Var}}[ \ensuremath{ #1 } ]  }
\newcommand{\bl}[1]{{\mathbf #1}}
\newcommand{\bs}[1]{{\boldsymbol #1}}
\newcommand{\tr}{\text{tr}}
\newtheorem{lem}{Lemma}
\newtheorem{thm}{Theorem}
\newtheorem{cor}{Corollary}
\title{Exact adaptive confidence intervals for linear regression coefficients}
\author{Peter D. Hoff$^{1}$ and  Chaoyu Yu$^{2}$  \\ 
 $^1$Department of Statistical Science, Duke University \\
 $^2$Department of Biostatistics, University of Washington-Seattle}
\begin{document} 

\maketitle

\begin{abstract}
We propose an adaptive confidence interval procedure (CIP)
for the coefficients in the normal linear regression model. 
This procedure has a frequentist coverage rate that is constant as a
function of the model parameters, yet provides smaller 
intervals 
than
the usual
interval procedure, 
on average across regression coefficients. 
The proposed procedure is obtained by defining a class of CIPs 
that all have exact  $1-\alpha$ frequentist coverage, and then 
selecting from this class the procedure that minimizes 
a prior expected interval width. Such a procedure 
may be described as ``frequentist, assisted by Bayes'' 
or FAB.  We describe an adaptive approach for estimating 
the prior distribution from the data so that 
exact non-asymptotic $1-\alpha$ coverage is maintained. 
Additionally, in a ``$p$ growing with $n$'' asymptotic scenario, 
this adaptive FAB procedure is asymptotically Bayes-optimal
among $1-\alpha$ frequentist CIPs. 

\smallskip
\noindent \textit{Keywords:} empirical Bayes, frequentist coverage, ridge regression, shrinkage, 
sparsity. 
\end{abstract}  

\section{Introduction}
Linear regression analyses routinely include 
point estimates and confidence intervals for the 
regression coefficients $\bs\beta=(\beta_1,\ldots, \beta_p)$ of the 
linear model $\bl y \sim N_n(\bl X\bs\beta , \sigma^2 \bl I)$. 
The most widely-used confidence interval procedure (CIP)
for an element $\beta_j$ of $\bs\beta$ is perhaps 
the usual $t$-interval centered around
the ordinary least-squares (OLS) estimate $\hat\beta_j$. 
This interval is uniformly most accurate among CIPs
that are derived from inversion of unbiased tests, and so it is called the 
uniformly most accurate unbiased (UMAU) CIP. 

In this article we consider alternatives to the UMAU procedure 
that have \emph{constant coverage}, that is, interval procedures 
$C_j(\bl y)$ satisfying 
\begin{equation}
\Pr( \beta_j \in C_j(\bl y) |\bs\beta, \sigma )  = 1-\alpha, 
 \ \forall  \, (\bs\beta,\sigma) \in \mathbb R^p\times \mathbb R^+. 
 \label{eqn:ccoverage} 
\end{equation} 
This property is what we normally think of as the 
usual frequentist definition of 
$1-\alpha$ coverage -
the random interval $C_j(\bl y)$ covers the true value 
$\beta_j$ with probability $1-\alpha$, no matter 
what $\bs\beta$ and $\sigma$ are. 
We introduce the term  ``constant coverage'' to distinguish such intervals  
from other intervals whose coverage is bounded below by $1-\alpha$ 
but varies with $(\bs \beta,\sigma^2)$, or 
so-called ``frequentist 
intervals'' whose coverage rate is only constant as a function 
of the parameters asymptotically. For example, the usual 
score interval 
for a coefficient in a logistic regression model 
has an actual finite-sample 
coverage rate that depends on the values of the parameters.

The UMAU interval procedure of course has constant coverage, and 
it also has an expected width that is constant for all values 
of $\bs\beta$. 
However, in many cases we have prior information that 
many of the elements of $\bs\beta$ may be close to a
particular value, such as zero. In this case, we might 
prefer  a CIP for $\beta_j$  that has a smaller expected width
for ``likely'' values of $\beta_j$  in exchange for 
having wider intervals for values of $\beta_j$ that are less likely. 
Specifically, if our prior information could be quantified in terms of 
a prior distribution with density $\pi(\bs\beta)$, 
then arguably we would 
be interested in an interval procedure 
that 
minimizes the prior expected width 
\begin{equation*}
  \Exp{  |C_j(\bl y)|  } = \int \int |C_j(\bl y)|  \, 
                           p(\bl y | \bs\beta,\sigma) \, d\bl y \, 
    \pi(\bs\beta)  d\bs \beta 
\end{equation*}
among all CIPs that satisfy  the constant coverage property 
(\ref{eqn:ccoverage}).  Such a procedure would still be 
``frequentist'' in that it would have $1-\alpha$ constant coverage, 
but it would also be Bayes-optimal among frequentist procedures. 
We refer to such a statistical 
procedure as ``frequentist, assisted by Bayes'' 
or FAB.  

In practice, an appropriate prior distribution may not be known 
in advance. In this article we present a method for adaptively 
estimating a normal prior distribution for $\bs\beta$ 
from the data $\bl y$, and then using 
this estimated prior distribution to construct an approximately 
Bayes-optimal CIP for each regression 
coefficient $\beta_j, j=1,\ldots, p$.  The CIP we propose 
satisfies the constant coverage condition (\ref{eqn:ccoverage})
exactly and non-asymptotically, but it is also 
Bayes-optimal asymptotically as $p$ and $n$ increase to infinity. 
Our proposed adaptive CIP builds on the work of 
\citet{pratt_1963}, who obtained a Bayes-optimal frequentist 
confidence interval for the mean of a normal population with 
a known variance. In the next section we review Pratt's 
FAB interval, and discuss an extension developed in \citet{yu_hoff_2016}
to accommodate an unknown variance. 
In Section 3 we 
further extend these ideas to the case of interval estimation 
for a linear regression coefficient, 
and show how we may  adaptively estimate  
a normal prior distribution for the elements of $\bs\beta$. 
The resulting 
adaptive FAB confidence interval we propose  maintains
exact, non-asymptotic constant coverage. 
Additionally,  since the accuracy of our adaptive estimate improves as 
$n$ and $p$ increase, our adaptive FAB procedure 
is Bayes-optimal under this type of asymptotic regime. 
Section 4 includes several numerical examples illustrating the 
use of the adaptive FAB procedure, including analyses of two  
datasets and a small simulation study. A discussion follows in Section 5. 

Several other authors have studied alternatives 
to UMAU intervals for regression parameters. 
\citet{ogorman_2001} developed a CIP based on a permutation test
that adapts to non-normal error distributions, as 
opposed to adapting to 
small or sparse values of the 
regression coefficients. 
\citet{kabaila_dilshani_2014} developed
an improved CIP based on an adaptive estimate of the 
variance $\sigma^2$. Their procedure depends on a user-specified 
spline function for which the  constant coverage property must 
be checked numerically. In contrast, 
our proposed CIP is obtained by adaptively 
selecting from a class of constant-coverage intervals 
based on easy to obtain 
estimates of a few parameters. For our procedure, constant 
coverage follows by construction and does not need to be checked numerically. 
\citet{lee_sun_sun_taylor_2016} developed a procedure
that has exact conditional coverage, given a model selection event and 
knowledge of $\sigma^2$. 
However, for cases where $\sigma^2$ is unknown, their 
suggested modification uses a plug-in estimate of $\sigma^2$ and 
achieves exact coverage only asymptotically.  
Other authors (\citet{buhlmann_2013}, \citet{degeer_etal_2014}, \citet{zhang_zhang_2014}) 
have considered confidence interval construction for sparse, 
high-dimensional regression, including the case that $p>n$. These 
approaches generally work by de-biasing sparse estimators of 
the regression coefficients. However, the coverage rates 
of these methods are asymptotic, and typically depend on conditions 
on the design matrix $\bl X$ and the degree of sparsity of 
$\bs \beta$. For example, in Section 4 we show that the finite-sample  
coverage of one such procedure can be very good in a sparse setting, 
but extremely poor if $\bs\beta$ is not sparse.

\section{Review of FAB intervals }
Suppose  $\hat\theta$  is normally distributed with unknown mean $\theta$ and 
known variance $\sigma^2$. 
Then for any choice of $s\in[0,1]$,
\begin{align*}\
\Pr(  z_{\alpha(1-s)} < (\theta-\hat\theta )/\sigma <
       z_{1-\alpha s} | \theta)  &=   (1-\alpha s) - \alpha (1-s) \\
 &  = 1- \alpha,
\end{align*}
where $z_p$ denotes the $p$th quantile of the standard normal distribution. 
As described in \citet{yu_hoff_2016}, this implies that 
for any function $s:\mathbb R \rightarrow [0,1]$ 
the set-valued function
\begin{equation}
C_s(\hat \theta) = 
\left \{ \theta : 
   \hat\theta + \sigma z_{\alpha(1-s(\theta))} < \theta < \hat\theta + \sigma z_{1-\alpha s(\theta)}  \right  \}
\label{eqn:gen_ci}
\end{equation}
is a $1-\alpha$ confidence procedure, satisfying
$\Pr( \theta \in C_s(\hat\theta) | \theta )=1-\alpha$.  
We refer to such a function $s(\theta)$ as a \emph{spending function}, 
as it corresponds to regions of the parameter space upon which 
type I error is ``spent.''

The usual procedure is $C_{1/2}(\hat \theta)$, obtained from
the constant  spending function $s(\theta)=1/2$. 
While $C_{1/2}$ is 
the uniformly most accurate unbiased 
(UMAU) confidence interval procedure (CIP), 
the lack of a uniformly most powerful 
test of $H_\theta:\Exp{\hat\theta}=\theta$ 
versus $K_\theta:\Exp{\hat\theta}\neq\theta$  means
there are  confidence procedures corresponding
to collections of biased level-$\alpha$ tests that
have smaller expected widths
than the UMAU procedure
for some regions of the parameter space.
If prior information is available that
$\theta$ is likely to be near some value $\mu$,
then we may be willing to incur
wider intervals  for $\theta$-values far from $\mu$
in exchange for smaller intervals near $\mu$.
With this in mind, \citet{pratt_1963}
developed a Bayes-optimal $1-\alpha$ CIP 
that minimizes the ``Bayes width'' or expected 
interval width averaged over values of both
$\hat\theta$ and $\theta$, where the latter averaging
is done with respect to a $N(\mu,\tau^2)$  prior distribution for
$\theta$. The resulting CIP has $1-\alpha$ frequentist
coverage for each value of $\theta$, but has lower
expected width for values of $\theta$ near the prior
mean (and wider expected widths elsewhere). 
We describe this interval as being 
``frequentist assisted by Bayes''
or FAB.
As shown in \citet{yu_hoff_2016}, 
the  spending function corresponding to Pratt's FAB 
confidence interval is characterized as follows: If $\tau^2>0$, then 
\begin{align} 
\label{eqn:sfabz}
 s(\theta) & = g^{-1} ( 2\sigma(\theta-\mu)/\tau^2 )  \\
 g(s)  & =  \Phi^{-1}(\alpha s) - \Phi^{-1}(\alpha(1-s) ). \nonumber 
\end{align}
If $\tau^2=0$, then $s(\theta)=1$ if $\theta>\mu$ and 
 $s(\theta)<0$ if $\theta<  \mu$. 
The value of $s(\mu)\in[0,1]$ does not affect the 
width of the confidence interval, but can affect whether or not 
$\mu$ is included in the interval or not (as an endpoint). 
We suggest taking $s(\mu)$ to be 1/2 when $\tau^2=0$, as it is 
in the case that $\tau^2>0$.

Now consider 
confidence interval construction for $\theta$ in the more typical case that 
$\sigma^2$ is unknown. Suppose we will observe independent statistics  
$\hat \theta$ and $\hat \sigma^2$, where 
$\hat\theta \sim N(\theta,\sigma^2)$ and  
$q \hat \sigma^2/\sigma^2 \sim   \chi^2_q$.   
Letting $t_p$ be the $p$th quantile of the $t$-distribution with
$q$ degrees of freedom,
any spending function $s:\mathbb R \rightarrow [0,1]$ 
defines a class of acceptance regions 
\[
  A_s(\theta) =\left  \{ (\hat \theta ,\hat\sigma) : 
 \hat\theta + \hat \sigma t_{\alpha(1-s(\theta))} < \theta < \hat\theta + \hat\sigma  t_{1-\alpha s(\theta)}  \right \},        
\]
so that for each $\theta$, $A_s(\theta)$ is the acceptance region 
of a level-$\alpha$ test. Inversion of this class of tests 
yields a CIP with exact $1-\alpha$ constant coverage, 
\begin{equation}
C_s(\hat \theta,\hat\sigma) = 
\left \{ \theta : 
   \hat\theta + \hat \sigma t_{\alpha(1-s(\theta))} < \theta < \hat\theta + \hat\sigma  t_{1-\alpha s(\theta)}  \right \}. 
\label{eqn:fab_ci}
\end{equation}
Important properties of
this CIP include the following:
\begin{thm} 
\label{prp:tprop}
If $\hat\theta \sim N(\theta, \sigma^2)$ and
$ q \hat \sigma^2/\sigma^2 \sim \chi^2_q $ are independent, then
\begin{enumerate}
\item $\Pr( \theta\in C_s(\hat \theta,\hat\sigma) | \theta,\sigma )=1-\alpha \, \forall \, (\theta,\sigma )\in \mathbb R\times \mathbb R^+$,
so the procedure has $1-\alpha$ constant coverage;
\item If $s(\theta)$ is nondecreasing then $C_s(\hat\theta,\hat \sigma)$ is
an interval with probability 1;
\item If $s(\theta)$ is not nondecreasing then $C_s(\hat\theta,\hat \sigma)$ is
an interval with probability less than 1.
\end{enumerate}
\end{thm}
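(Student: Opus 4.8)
The plan is to dispatch the three parts in turn, in each case starting from the equivalence $\theta\in C_s(\hat\theta,\hat\sigma)\iff (\hat\theta,\hat\sigma)\in A_s(\theta)\iff t_{\alpha(1-s(\theta))}<(\theta-\hat\theta)/\hat\sigma<t_{1-\alpha s(\theta)}$, obtained by subtracting $\hat\theta$ and dividing by $\hat\sigma>0$. For part~1, observe that $T:=(\theta-\hat\theta)/\hat\sigma$ equals $(\theta-\hat\theta)/\sigma$ divided by the independent quantity $\sqrt{(q\hat\sigma^2/\sigma^2)/q}$; since $(\theta-\hat\theta)/\sigma\sim N(0,1)$ and $q\hat\sigma^2/\sigma^2\sim\chi^2_q$, we get $T\sim t_q$ for every $(\theta,\sigma)$. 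As $s(\theta)$ is a fixed number in $[0,1]$, the displayed event has probability $(1-\alpha s(\theta))-\alpha(1-s(\theta))=1-\alpha$, with the boundary values $s(\theta)\in\{0,1\}$ read off using $t_0=-\infty$ and $t_1=+\infty$. This is the constant-coverage assertion.

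For part~2, write $L_\theta=\hat\theta+\hat\sigma t_{\alpha(1-s(\theta))}$ and $U_\theta=\hat\theta+\hat\sigma t_{1-\alpha s(\theta)}$, so that $C_s(\hat\theta,\hat\sigma)=\{\theta:L_\theta<\theta<U_\theta\}=\{\theta:\theta-L_\theta>0\}\cap\{\theta:\theta-U_\theta<0\}$. If $s$ is nondecreasing, then $\alpha(1-s(\theta))$ and $1-\alpha s(\theta)$ are both nonincreasing in $\theta$, and since $p\mapsto t_p$ is nondecreasing, both $L_\theta$ and $U_\theta$ are nonincreasing in $\theta$ (here I use $\hat\sigma>0$, which holds with probability one). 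Hence $\theta\mapsto\theta-L_\theta$ and $\theta\mapsto\theta-U_\theta$ are strictly increasing, so $\{\theta-L_\theta>0\}$ is a half-line unbounded above and $\{\theta-U_\theta<0\}$ a half-line unbounded below, and their intersection is an interval (possibly empty).

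For part~3, suppose $s$ is not nondecreasing, so there exist $\theta_1<\theta_2$ with $s_1:=s(\theta_1)>s(\theta_2)=:s_2$. The mechanism I will exploit is that $t_{\alpha(1-s_1)}<t_{\alpha(1-s_2)}$, which forces $L_{\theta_2}-L_{\theta_1}=\hat\sigma[t_{\alpha(1-s_2)}-t_{\alpha(1-s_1)}]>0$ for every $\hat\sigma>0$: the lower endpoint of the acceptance region is \emph{higher} at $\theta_2$ than at $\theta_1$. Concretely, for $\hat\sigma$ large and a small $\eta>0$, set $\hat\theta=\theta_2-\hat\sigma t_{\alpha(1-s_2)}+\eta$. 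Then $L_{\theta_2}=\theta_2+\eta>\theta_2$, so $\theta_2\notin C_s$. A short computation gives $\theta_1-L_{\theta_1}=-(\theta_2-\theta_1)-\eta+\hat\sigma[t_{\alpha(1-s_2)}-t_{\alpha(1-s_1)}]$, which is positive once $\hat\sigma$ is large, and $\theta_1-U_{\theta_1}=-(\theta_2-\theta_1)-\eta-\hat\sigma[t_{1-\alpha s_1}-t_{\alpha(1-s_2)}]<0$ for all $\hat\sigma>0$ (here $t_{1-\alpha s_1}>t_{\alpha(1-s_2)}$ since $1-\alpha s_1\ge 1-\alpha>\alpha\ge\alpha(1-s_2)$ when $\alpha<1/2$), so $\theta_1\in C_s$. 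Finally, using the uniform bounds $L_\theta\le\hat\theta+\hat\sigma t_\alpha$ and $U_\theta\ge\hat\theta+\hat\sigma t_{1-\alpha}$ and the fact that $t_\alpha<t_{1-\alpha}$, any $\theta_c=\theta_2+\eta+\hat\sigma c$ with $c\in(t_\alpha-t_{\alpha(1-s_2)},\,t_{1-\alpha}-t_{\alpha(1-s_2)})$ satisfies $L_{\theta_c}<\theta_c<U_{\theta_c}$ and $\theta_c>\theta_2$, i.e.\ $\theta_c\in C_s$. Thus $\theta_1<\theta_2<\theta_c$ with $\theta_1,\theta_c\in C_s$ but $\theta_2\notin C_s$, so $C_s(\hat\theta,\hat\sigma)$ is not an interval. (The boundary cases $s_i\in\{0,1\}$ are covered by the same computation with $t_0=-\infty$, $t_1=+\infty$.)

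Every inequality used in part~3 is strict, and each of $L_\theta,U_\theta$ evaluated at the fixed arguments $\theta_1,\theta_2,\theta_c$ is affine, hence continuous, in $(\hat\theta,\hat\sigma)$, so the whole configuration persists for $(\hat\theta,\hat\sigma)$ ranging over an open neighborhood; since $(\hat\theta,\hat\sigma)$ has a strictly positive density on $\mathbb R\times\mathbb R^+$, this neighborhood has positive probability, which gives $\Pr(C_s(\hat\theta,\hat\sigma)\text{ is an interval})<1$. The step I expect to be the main obstacle is exactly this part~3 construction: the hypothesis tells us only that $s$ misbehaves between two particular points, whereas a failure of the interval property needs a point excluded from $C_s$ with included points on \emph{both} sides of it; the resolution is to take $\theta_2$ itself as the excluded point and to manufacture $\theta_c$ on its far side from the uniform control of $L_\theta$ and $U_\theta$, while checking that $\theta_1$ simultaneously stays inside $C_s$.
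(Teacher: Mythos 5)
Your proof is correct, and for item 3 --- the only item the paper actually proves (items 1 and 2 are cited from Yu and Hoff (2016)) --- it is essentially the paper's argument in mirror image. The paper takes $\theta_1<\theta_2$ with $s(\theta_1)>s(\theta_2)$, orders the four quantiles as $\underline t_1<\underline t_2<\bar t_1<\bar t_2$, and exhibits an open set of $(\hat\theta,\hat\sigma)$ with $\hat\theta<\theta_1$ on which $\theta_1$ falls \emph{above} its upper acceptance limit (so $\theta_1\notin C_s$) while $\theta_2$ and $\hat\theta$ remain inside; the excluded point is $\theta_1$ and the left-hand witness is $\hat\theta$ itself, which always lies in $C_s$ when $\alpha<1/2$. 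You instead push $\theta_2$ \emph{below} its lower acceptance limit, keep $\theta_1$ inside via the gap $\hat\sigma\,[\,t_{\alpha(1-s_2)}-t_{\alpha(1-s_1)}\,]$, and manufacture the right-hand witness $\theta_c$ from the uniform bounds $L_\theta\le\hat\theta+\hat\sigma t_\alpha$ and $U_\theta\ge\hat\theta+\hat\sigma t_{1-\alpha}$; this is the same mechanism (strict monotonicity of $p\mapsto t_p$ turning a dip in $s$ into a three-point non-interval configuration on a set of positive measure), just run on the lower rather than the upper endpoints. Your explicit pivotal argument for item 1 and the monotone-endpoint argument for item 2 are also correct and are a self-contained bonus relative to the paper, which only cites these facts.
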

Items 1 and 2 were shown in \citet{yu_hoff_2016}, and a proof of 
item 3 is in the appendix. 
This result says that
every  spending function corresponds to a $1-\alpha$ frequentist
confidence procedure, and every \emph{nondecreasing}  spending function
corresponds to a $1-\alpha$ confidence \emph{interval} procedure.  
 \citet{yu_hoff_2016} showed that the
spending function (\ref{eqn:sfabz}) that corresponds to Pratt's 
$z$-interval is strictly increasing. 
If such a nondecreasing spending function $s$ is used, then  the
lower and upper endpoints of the interval, $ \underline{\theta}$ and $\overline{\theta}$,  are obtained by solving the equations
\begin{equation} 
F \left (\tfrac{\underline{\theta} -\hat \theta }{\hat\sigma}\right) =   \alpha(1- s(\underline{\theta})),  \ \ \ \  
F \left (\tfrac{ \hat\theta - \overline{\theta} }{\hat\sigma}\right) = \alpha s(\overline{\theta})  ,
\label{eqn:endpoints}
\end{equation}
where $F$ is the CDF of the $t_{q}$ distribution.
These equations can be solved using a zero-finding algorithm, and 
noting that $\underline \theta < \hat\theta + \hat\sigma t_{\alpha} $ and 
    $ \hat\theta + \hat\sigma t_{1-\alpha } <\overline \theta$. 
Furthermore, 
this implies that $\underline \theta < \hat\theta < \overline\theta$  
as long as $\alpha<1/2$.

The spending function 
$s(\theta)$ should be chosen on the basis of any additional information 
we may have about the value of $\theta$. 
While  Pratt's FAB interval uses  
prior information for a scalar parameter, 
in multiparameter settings such information may come 
from the data itself. 
For example, estimates of some parameters might suggest plausible 
values for others. 
In this case, we may want to use a spending function 
$\tilde s(\theta)$ 
that minimizes a Bayes risk corresponding to a ``prior'' 
distribution that is adaptively estimated  
from the data. We refer to 
such as procedure as adaptive FAB. 
Fortunately, the results of Proposition 
\ref{prp:tprop} hold not just for fixed spending functions, but 
also those that are random but statistically independent of $\hat\theta$ and 
$\hat\sigma^2$:
\begin{cor}
If $\hat\theta \sim N(\theta, \sigma^2)$ and 
$ q \hat \sigma^2/\sigma^2 \sim \chi^2_q $, and  $\hat\theta$, $\hat\sigma^2$ and  $\tilde s$ are independent, then
$\Pr( \theta\in C_{\tilde s}(\hat \theta,\hat \sigma) | \theta,\sigma )=1-\alpha$.  
\label{cor:acoverage}
\end{cor}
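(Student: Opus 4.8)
The plan is to condition on the realized value of the random spending function $\tilde s$ and reduce to the fixed-spending-function case already established in Proposition~\ref{prp:tprop}. By construction, the set $C_{\tilde s}(\hat\theta,\hat\sigma)$ defined in \eqref{eqn:fab_ci} depends on the data only through $(\hat\theta,\hat\sigma)$ together with the realized function; so for \emph{each fixed} $s:\mathbb R\to[0,1]$, Proposition~\ref{prp:tprop}(1) gives
\[
\Pr( \theta \in C_s(\hat\theta,\hat\sigma) \mid \theta,\sigma ) = 1-\alpha .
\]
Because $\tilde s$ is independent of $(\hat\theta,\hat\sigma^2)$, the conditional distribution of $(\hat\theta,\hat\sigma^2)$ given $\tilde s = s$ is the same $N(\theta,\sigma^2)\times\chi^2_q$ product law as the unconditional one. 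Hence the displayed identity continues to hold with the fixed $s$ replaced by the conditioning value of $\tilde s$; that is, $\Pr( \theta \in C_{\tilde s}(\hat\theta,\hat\sigma) \mid \tilde s,\theta,\sigma ) = 1-\alpha$ almost surely.

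Then I would finish with the law of total probability (the tower property), integrating the constant $1-\alpha$ against the marginal law of $\tilde s$:
\begin{align*}
\Pr( \theta \in C_{\tilde s}(\hat\theta,\hat\sigma) \mid \theta,\sigma )
 &= \Exp{ \Pr( \theta \in C_{\tilde s}(\hat\theta,\hat\sigma) \mid \tilde s,\theta,\sigma ) } \\
 &= \Exp{ 1-\alpha } \;=\; 1-\alpha ,
\end{align*}
which is the claim. Note that nothing in this argument uses monotonicity of $\tilde s$: constant coverage holds for \emph{every} spending function, and the nondecreasing property is needed only via Proposition~\ref{prp:tprop}(2) to ensure that $C_{\tilde s}$ is an interval rather than a more complicated set, which is irrelevant to the coverage statement.

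The one genuinely delicate point is measurability: one must regard $\tilde s$ as a random element of a suitable space of $[0,1]$-valued functions and verify that $(s,y)\mapsto \mathbf{1}\{\theta \in C_s(y)\}$ is jointly measurable, so that the conditional probability above is well defined and the interchange of expectation and conditioning (a Fubini-type step) is justified. I expect this to be the main obstacle to a fully rigorous statement in general, but in the adaptive FAB construction of Section~3 the estimated spending function is a deterministic measurable transformation of a finite-dimensional statistic that is independent of $(\hat\theta,\hat\sigma^2)$; in that case the measurability and independence requirements reduce to their ordinary finite-dimensional forms and the argument above goes through verbatim.
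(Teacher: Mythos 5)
Your argument is exactly the paper's: condition on $\tilde s$, use independence to invoke the fixed-$s$ coverage from Theorem~\ref{prp:tprop}(1), and integrate the constant $1-\alpha$ via the tower property. The added remarks on measurability and on the irrelevance of monotonicity are sensible but do not change the route; the proposal is correct and matches the paper's proof.
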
 
This result follows by conditioning  on $\tilde s$:
$\Pr( \theta\in C_{\tilde s}(\hat \theta,\hat \sigma) | \theta,\sigma ) = 
 \Exp{ \Pr( \theta\in C_{\tilde s}(\hat \theta,\hat \sigma) |\tilde s, \theta,\sigma ) | \theta,\sigma}$, but the inner conditional probability is $1-\alpha$
since $\tilde s$ and $(\hat\theta, \hat\sigma)$ are independent
and $C_{\tilde s}$ has $1-\alpha$ coverage for each fixed $\tilde s$. 
\citet{yu_hoff_2016} made use of this fact to develop
an adaptive FAB confidence interval procedure 
for the means of multiple normal populations. 
Their adaptive procedure 
for the mean $\theta$
of a given population is $C_{\tilde s}( \hat \theta,\hat \sigma)$, 
where $\tilde s$ is 
the spending function
(\ref{eqn:sfabz}) with $(\mu,\sigma^2,\tau^2)$
replaced by estimates 
using data from the other populations.
This procedure provides exact $1-\alpha$ confidence intervals
for each population, and is asymptotically optimal in the case
of the normal hierarchical model.


\section{FAB $t$-intervals for regression parameters}
We now show how the results discussed in the previous section may be 
used to construct adaptive frequentist confidence intervals 
for linear regression parameters. 
The intervals we construct have exact $1-\alpha$ constant coverage 
and do not require asymptotic approximations or  
assumptions on the design matrix or the  unknown parameters. 
Under some conditions, the intervals are also asymptotically Bayes-optimal. 
The intervals do require that the number $n$ of observations 
is larger than the number $p$ of regressors. 

\subsection{FAB regression intervals}

Consider the problem of constructing
confidence intervals for the elements of an unknown vector
$\bs \beta\in \mathbb R^p$
based on data $\bl y\in \mathbb R^{n}$ and $\bl X\in\mathbb R^{n\times p}$
from the  normal linear regression model
 $\bl y \sim N_n(\bl X \bs \beta , \sigma^2 \bl I)$.
As is well known,
\begin{align*}
\hat{\bs\beta} & = (\bl X^\top \bl X)^{-1} \bl X^\top \bl y  \sim 
  N_p( \bs \beta, \sigma^2  (\bl X^\top \bl X)^{-1} ) \\
 \hat\sigma^2\ & = || \bl y - \bl X \hat{\bs \beta}||^2/(n-p)  \sim \tfrac{\sigma^2}{n-p} \chi^2_{n-p}, 
\end{align*}
with  $\hat{\bs \beta}$ and $\hat\sigma^2$ being independent. 
In particular, 
$(\hat\beta_j - \beta_j)/(w_j \hat\sigma) \sim t_{n-p}$, where
$w_j$ is the square-root of 
the $j$th diagonal entry of
$(\bl X^\top \bl X)^{-1}$.
The UMAU confidence interval for $\beta_j$ is 
\begin{equation} 
C(\hat\beta_j,\hat \sigma) = 
 \left \{\beta_j :  \hat \beta_j + w_j\hat \sigma t_{\alpha/2} <\beta_j < 
    \hat \beta_j + w_j \hat \sigma t_{1-\alpha/2} \right  \}. 
\label{eqn:uregci}
\end{equation}
This interval  has $1-\alpha$ coverage probability and 
an expected width that is constant as a function of 
the true value of  $\beta_j$. 

Now suppose that prior information about $\bs \beta$ 
suggests that $\bs \beta \sim N_p( \bl 0 , \tau^2 \bl I) $
for some value of $\tau^2$
(other prior distributions will be discussed in Sections 4 and 5). 
 If $\tau^2$ and $\sigma^2$ were known, 
the Bayes-optimal CIP for $\beta_j$ would be obtained simply by replacing 
$\hat\theta$ and $\sigma$ in (\ref{eqn:gen_ci}) and (\ref{eqn:sfabz}) 
with $\hat\beta_j$ and $w_j \sigma$, yielding 
\begin{align}
 C_{s}(\hat\beta_j) & = \left \{ 
   \beta_j : \hat \beta_j + w_j \sigma z_{\alpha(1-s(\beta_j))} 
   < \beta_j < 
     \hat \beta_j + w_j \sigma z_{1-\alpha s(\beta_j)} \right \} \label{eqn:oracle} \\
 s(\beta_j ) & =   g^{-1}(2 w_j \sigma \beta_j/ \tau^2 ).  \nonumber
\end{align}
However, since $\tau^2$ and $\sigma^2$ are unknown we alter 
this interval as follows:
\begin{itemize}
\item $\sigma^2$ is replaced by $\hat\sigma^2$,
which is
independent of $\hat\beta_j$ and 
satisfies $(n-p) \hat \sigma^2 /\sigma^2\sim \chi^2_{n-p}$; 
\item $z$-quantiles are replaced by the quantiles of 
the $t_{n-p}$ distribution; 
\item $s(\beta_j)$  is replaced by 
   $\tilde s(\beta_j) = g^{-1}(2 w_j \tilde \sigma \beta_j/ \tilde \tau^2 )$, 
where $(\tilde \tau^2, \tilde \sigma^2)$ 
are independent 
of $(\hat \beta_j , \hat\sigma^2)$. 
\end{itemize}
These modifications yield an adaptive FAB interval given by 
\begin{equation} 
C_{\tilde s} (\hat\beta_j,\hat\sigma ) = 
 \left \{\beta_j :  \hat \beta_j + w_j\hat \sigma t_{\alpha (1-\tilde s(\beta_j))} <\beta_j < 
    \hat \beta_j + w_j \hat \sigma t_{1-\alpha \tilde s(\beta_j)}\right  \}. 
\label{eqn:fregci}
\end{equation}
Such an interval satisfies the conditions of 
Corollary \ref{cor:acoverage}, 
thereby guaranteeing exact $1-\alpha$ frequentist coverage, 
regardless of whether or not the values of $\bs \beta$ 
are approximately normally distributed, or if 
the estimates $\tilde \tau^2, \tilde \sigma^2$ are accurate. 
However, if the normal approximation and adaptive estimates are 
accurate, 
then we expect the resulting
FAB interval (\ref{eqn:fregci}) 
to be  close to the ``oracle'' interval (\ref{eqn:oracle}), 
which is 
Bayes-optimal and 
narrower on average than the UMAU procedure
given by (\ref{eqn:uregci}). 

The approximate optimality of $C_{\tilde s}$ 
is considered more formally in the 
next subsection using an asymptotic argument. 
First, we discuss obtaining estimators  
$(\tilde\tau^2, \tilde \sigma^2)$ 
that are independent of $(\hat\beta_j,\hat\sigma^2)$ 
so that the conditions of 
Corollary \ref{cor:acoverage} are met.  
Let $\bl P_X = \bl X (\bl X^\top \bl X)^{-1} \bl X$ 
and $\bl P_0 = \bl I -\bl P_X$ be the 
projection matrices onto the space spanned by the columns of 
$\bl X$ and the corresponding null space, respectively. 
Recall that the OLS estimate $\hat\beta_j$ is given by 
$\hat \beta_j = \bl a^\top \bl y$, where 
$\bl a$ is the $j$th row of the matrix  
$(\bl X^\top \bl X)^{-1} \bl X^\top$. 
Let $\bl P_1 = \bl a\bl a^\top/\bl a^\top \bl a$
be the projection matrix associated with $\bl a$, 
and let $\bl P_2 = \bl P_X ( \bl I - \bl P_1)$. 
We can decompose $\bl y$ as 
\begin{align*}  
\bl y = \bl I\bl y &= (\bl P_0 + \bl P_X ) \bl y \\
  &= ( \bl P_0 + \bl P_1 +\bl P_2) \bl y   \\
  &=  \bl P_0 \bl y + \bl P_1 \bl y + \bl P_2 \bl y \equiv 
  \bl y_0 + \bl y_1 +\bl y_2. 
\end{align*}
Since $\bl P_k \bl P_l =\bl 0 $ for $k\neq l$, 
we have that $\bl y_0$, $\bl y_1$ and $\bl y_2$ are 
statistically independent.  
Now 
the OLS estimate satisfies 
$\hat\beta_j = \bl a^\top \bl P_1 \bl y = \bl a^\top \bl y_1$, 
and $\hat\sigma^2 = \bl y_0^\top \bl y_0/(n-p)$, and so both 
estimates are statistically independent of each other and  the vector 
$\bl y_2$. Therefore, any estimates  $(\tilde \tau^2,\tilde \sigma^2)$
that are functions of $\bl y_2$ will be independent
of $(\hat\beta_j,\hat\sigma^2)$ and so can be used to construct  a
spending function $\tilde s(\beta_j)$ that satisfies the 
conditions of Corollary \ref{cor:acoverage}. 

To obtain such an estimate $(\tilde \tau^2,\tilde \sigma^2)$, 
let $\bl G_2$ be an orthonormal basis for the space spanned by $\bl P_2$
(for example, the matrix of eigenvectors of $\bl P_2$ that correspond 
to non-zero eigenvalues). Then $\bl G_2  \bl G_2^\top = \bl P_2$, 
$\bl G_2^\top \bl G_2 = \bl I_{p-1}$, and 
  $\bl z_2 = \bl G_2^\top \bl y_2 = \bl G_2^\top \bl y \sim N_{p-1}(
   \bl G_2^\top \bl X \bs \beta , 
\sigma^2 \bl I)$. 
Under the prior model $\bs\beta \sim N(  \bl 0 , \tau^2 \bl I)$, 
the marginal distribution for $\bl z_2$ is  therefore
\begin{equation} 
 \bl z_2  
\sim N_{p-1}( \bl 0 , \bl X_2 \bl X_2^\top \tau^2 + \sigma^2 \bl I) ,
 \label{eqn:mmod}
\end{equation}
where $\bl X_2 = \bl G^\top_2 \bl X$. A variety of empirical Bayes estimates
of $(\tau^2, \sigma^2)$
may be obtained from this marginal distribution. For example, 
noting that
$\Exp{ \bl z_2^\top \bl A^\top \bl A   \bl z_2 } = 
   \tr( \bl X_2^\top \bl A^\top \bl A \bl X_2) \tau^2 + 
     \tr(\bl A^\top \bl A ) \sigma^2$  for any matrix $\bl A$, 
unbiased moment estimates may be obtained by 
finding $(\tilde \tau^2, \tilde \sigma^2)$ that solve simultaneously 
two equations, given by 
$  \bl z_2^\top \bl A^\top \bl A   \bl z_2  = 
   \tr( \bl X_2^\top \bl A^\top \bl A \bl X_2) \tilde \tau^2 + 
     \tr(\bl A^\top \bl A ) \tilde \sigma^2 $,
for two different values of $\bl A$. 
Alternatively, $(\tilde \tau^2 ,\tilde \sigma^2)$ 
may be taken to be the maximum likelihood estimate based on 
the marginal model (\ref{eqn:mmod}). This estimate is discussed 
further in the next subsection.

To summarize, 
we have constructed statistics 
$\hat\beta_j,\hat\sigma^2,\tilde \tau^2, \tilde \sigma^2$ 
such that for each $(\bs\beta,\sigma^2)\in \mathbb R^p\times \mathbb R^+$, 
\begin{itemize}
\item[C1:]  $\hat \beta_j \sim N(\beta_j ,w_j^2 \sigma^2 ) $, 
     $(n-p)\hat \sigma^2/\sigma^2 \sim \chi^2_{n-p} $, and 
     $\hat \beta_j$ and $\hat\sigma^2$ are independent; 
\item[C2:] $(\tilde \tau^2 , \tilde \sigma^2)$ are independent of 
$( \hat\beta_j, \hat \sigma^2)$. 
\end{itemize} 
Therefore the spending function 
$\tilde s(\beta_j) = g^{-1}(2 w_j \tilde \sigma \beta_j/ \tilde \tau^2 )$ 
is independent of  $( \hat\beta_j, \hat \sigma^2)$ and so the 
 conditions of 
of Corollary \ref{cor:acoverage} are met. We summarize these results with 
the following theorem:
\begin{thm} 
If $\bl y \sim N_{n}(\bl X\bs\beta , \sigma^2 \bl I)$ 
and $(\hat\beta_j,\hat\sigma^2,\tilde \tau^2, \tilde \sigma^2)$ 
satisfy C1 and C2, 
then 
the adaptive FAB CIP
given by (\ref{eqn:fregci}) has $1-\alpha$ 
coverage for every value of $\bs\beta$ and $\sigma^2$, that is 
\[
 \Pr( \beta_j \in C_{\tilde s}(\hat\beta_j,\hat\sigma) | \bs\beta,\sigma) 
 = 1-\alpha, \  \forall ( \bs\beta , \sigma ) \in  \mathbb R^p \times \mathbb R^+. 
\]
\end{thm}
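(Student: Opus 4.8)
The plan is to deduce the theorem as an immediate consequence of Corollary~\ref{cor:acoverage}, by identifying the present regression setting with the scalar location problem treated there. The only point needing attention is the scale factor $w_j$: under C1 the estimator $\hat\beta_j$ has standard deviation $w_j\sigma$ rather than $\sigma$, so in applying Corollary~\ref{cor:acoverage} I would take the generic triple $(\hat\theta,\hat\sigma,\theta)$ to be $(\hat\beta_j,\;w_j\hat\sigma,\;\beta_j)$, with ``variance parameter'' $w_j\sigma$ and $q=n-p$ degrees of freedom.

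With this substitution fixed, I would proceed in three short steps. First, verify the distributional hypotheses of Corollary~\ref{cor:acoverage}: from C1, $\hat\beta_j\sim N(\beta_j,(w_j\sigma)^2)$ and $(n-p)(w_j\hat\sigma)^2/(w_j\sigma)^2=(n-p)\hat\sigma^2/\sigma^2\sim\chi^2_{n-p}$, with $\hat\beta_j$ independent of $\hat\sigma^2$ and hence of $w_j\hat\sigma$, so $(\hat\beta_j,w_j\hat\sigma)$ has exactly the joint law the corollary assumes. Second, verify the independence of the random spending function: because $w_j$ is a fixed constant determined by $\bl X$, the map $\tilde s(\cdot)=g^{-1}(2w_j\tilde\sigma\,\cdot/\tilde\tau^2)$ is a measurable function of $(\tilde\tau^2,\tilde\sigma^2)$ alone (and takes values in $[0,1]$, since $g^{-1}$ does), so by C2 it is independent of $(\hat\beta_j,\hat\sigma^2)$ and therefore of $(\hat\beta_j,w_j\hat\sigma)$. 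Third, observe that the interval~(\ref{eqn:fregci}) is, term for term, $C_{\tilde s}(\hat\beta_j,w_j\hat\sigma)$ in the notation of~(\ref{eqn:fab_ci}) with $F$ the $t_{n-p}$ CDF. Corollary~\ref{cor:acoverage} then gives $\Pr(\beta_j\in C_{\tilde s}(\hat\beta_j,\hat\sigma)\mid\bs\beta,\sigma)=1-\alpha$ for every $(\bs\beta,\sigma)\in\mathbb R^p\times\mathbb R^+$, as claimed.

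I do not anticipate a real obstacle: all of the substantive work is already contained in the construction of $(\hat\beta_j,\hat\sigma^2,\tilde\tau^2,\tilde\sigma^2)$ via the orthogonal decomposition $\bl y=\bl y_0+\bl y_1+\bl y_2$ that precedes the statement, and that construction is precisely what delivers conditions C1 and C2. What remains is bookkeeping --- tracking the constant $w_j$ through the reduction, and adopting the Section~2 convention for the spending function when $\tilde\tau^2=0$ (or truncating at zero a moment estimator that returns a negative value), neither of which affects the coverage conclusion.
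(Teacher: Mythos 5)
Your proposal is correct and follows the same route as the paper: the theorem is stated there as an immediate consequence of Corollary~\ref{cor:acoverage}, with C1 supplying the distributional hypotheses for $(\hat\beta_j, w_j\hat\sigma)$ and C2 supplying the independence of the spending function $\tilde s$. Your explicit bookkeeping of the constant $w_j$ through the reduction is exactly the (implicit) content of the paper's argument.
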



\subsection{Approximate optimality}

As discussed above, if 
$\beta_j \sim N(0,\tau^2 )$
and $\sigma^2$ and $\tau^2$ were known then the  oracle
FAB interval $C_s$ given by  (\ref{eqn:oracle}) is 
Bayes-optimal 
in that it minimizes the prior expected interval width   
$\Exp{ | C | }$ 
among procedures $C$ that have $1-\alpha$ frequentist coverage. 
This prior expected width is an expectation over both 
the estimate $\hat\beta_j$ and the value of $\beta_j$  with 
respect to the $N(0,\tau^2 )$ prior distribution.

The adaptive FAB interval $C_{\tilde s}$ given by (\ref{eqn:fregci})
differs from the oracle FAB interval in three
ways:
the value of $\sigma^2$
has been replaced by $\hat \sigma^2$;
the $z$-quantiles
have been replaced by $t$-quantiles; and
 the spending function $s$ that depends on $(\tau^2,\sigma^2)$
has been
replaced by $\tilde s$ that depends on $(\tilde\tau^2,\tilde \sigma^2)$.
In this subsection we take $(\tilde \tau^2, \tilde \sigma^2)$ to be 
the maximizers of the likelihood given by the marginal model 
(\ref{eqn:mmod}). 
The resulting interval still has $1-\alpha$ frequentist coverage, but it is 
only an approximation to $C_s$, and so we must have 
$\Exp{ |C_{\tilde s} | } > \Exp{ | C_s|}$
since $C_s$ is Bayes-optimal.
However, if 
$n-p$ is large then 
the $t$-quantiles will be close to the corresponding
$z$-quantiles, and 
we expect that $\hat\sigma^2 \approx\sigma^2$.
If $p$ is also large then
under the prior $\bs\beta \sim N(0,\tau^2 \bl I)$
we expect that $\tilde \tau^2 \approx \tau^2$ and $\tilde \sigma^2 \approx \sigma^2$. As a result,
we should have $\tilde s(\beta_j) \approx s(\beta_j)$ and so 
we expect that $\Exp{ |C_{\tilde s} | } \approx  \Exp{ | C_s|}$, 
that is, the FAB procedure will be approximately Bayes-optimal. 

We investigate this more formally with an asymptotic comparison 
of the widths of the adaptive and oracle FAB procedures. 
We first obtain an asymptotic result for a single scalar 
parameter $\beta_j$, and then discuss the result in the context of the 
linear regression model. 
Consider a sequence of experiments indexed by $n$  such that for  
each $n$ we have 
statistics $(\hat\beta_j,\hat\sigma^2,\tilde \tau^2, \tilde \sigma^2)$
that satisfy coverage conditions C1 and C2 given above.
Furthermore, 
suppose the following asymptotic conditions hold as 
 $n\rightarrow \infty$:
\begin{enumerate}
\item[A1.] $(n-p)\rightarrow \infty$ and  $\sigma^2/n \rightarrow \sigma_\infty^2>0$;
\item[A2.] 
$(\tilde \tau^2, \tilde \sigma^2/n) \rightarrow 
  (\tau^2, \sigma^2_\infty)$ in probability.  
\item[A3.] $w^2_j n \rightarrow w_{0}^2>0$;
\end{enumerate}
We consider this case where $\sigma^2$ grows with $n$ since 
otherwise, if $\sigma^2$ were fixed then the 
widths of the oracle FAB, adaptive FAB and UMAU intervals 
would all converge to zero at the same rate. 

\begin{lem} 
\label{lem:aopt}
Under the conditions C1, C2, A1, A2 and A3, 
the width $|C_{\tilde s}|$ 
of the FAB procedure (\ref{eqn:fregci}) satisfies  
$\Exp{ | C_{\tilde s} | }  \rightarrow   
\Exp{ | C_{s} | }$  as $n\rightarrow \infty$, 
where $C_{s}$ is the Bayes-optimal FAB procedure for the case that 
$\hat\beta_j \sim N(\beta_j , w_0^2 \sigma_\infty^2)$ and 
$\beta_j\sim N(0,\tau^2)$. 
\end{lem}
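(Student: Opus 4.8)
The plan is to establish two things and combine them: (i) uniform integrability of the family $\{|C_{\tilde s}|\}_n$, and (ii) for any subsequence, a coupling along a further subsequence on which $|C_{\tilde s}|\to|C_s|$ almost surely; by the subsequence principle for real sequences these yield $\Exp{|C_{\tilde s}|}\to\Exp{|C_s|}$. Throughout I would write $F_{n-p}$ for the $t_{n-p}$ cdf and treat $\hat\beta_j$ as marginally $N(0,\tau^2+w_j^2\sigma^2)$ (averaging over the sampling and the $N(0,\tau^2)$ prior, as in the statement), with $\hat\beta_{j,\infty}\sim N(0,\tau^2+w_0^2\sigma_\infty^2)$ the limiting analogue. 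First I would record the structure of the two intervals. Since $\alpha<1/2$, the map $g(s)=\Phi^{-1}(\alpha s)-\Phi^{-1}(\alpha(1-s))$ is a continuous strictly increasing bijection of $(0,1)$ onto $\mathbb R$, so $g^{-1}$ is continuous and increasing and $\tilde s(\beta)=g^{-1}(2w_j\tilde\sigma\beta/\tilde\tau^2)$ is strictly increasing with $\tilde s(0)=g^{-1}(0)=1/2$ (on $\{\tilde\tau^2>0\}$; on the null event $\{\tilde\tau^2=0\}$ take $\tilde s\equiv1/2$). By Theorem~\ref{prp:tprop}(2), $C_{\tilde s}=(\underline{\beta}_j,\overline{\beta}_j)$, where (using symmetry of $t_{n-p}$) $\underline{\beta}_j$ and $\overline{\beta}_j$ are the unique zeros in $\beta$ of the strictly increasing functions $\Psi_n(\beta)=F_{n-p}\big((\beta-\hat\beta_j)/(w_j\hat\sigma)\big)-\alpha(1-\tilde s(\beta))$ and $\widetilde\Psi_n(\beta)=1-\alpha\tilde s(\beta)-F_{n-p}\big((\beta-\hat\beta_j)/(w_j\hat\sigma)\big)$; the oracle interval $C_s=(\underline{\beta}_{j,\infty},\overline{\beta}_{j,\infty})$ is given by the same equations with $F_{n-p}$, $w_j\hat\sigma$, $\hat\beta_j$, $\tilde s$ replaced by $\Phi$, $w_0\sigma_\infty$, $\hat\beta_{j,\infty}$, and $s(\beta)=g^{-1}(2w_0\sigma_\infty\beta/\tau^2)$.

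For uniform integrability I would use a crude deterministic bound that only exploits $\tilde s(0)=1/2$. If $\beta\le0$ lies in $C_{\tilde s}$, then $\tilde s(\beta)\le\tilde s(0)=1/2$ forces $F_{n-p}\big((\beta-\hat\beta_j)/(w_j\hat\sigma)\big)>\alpha(1-\tilde s(\beta))\ge\alpha/2$, i.e.\ $\beta>\hat\beta_j+w_j\hat\sigma\,t^{(n-p)}_{\alpha/2}$; with the symmetric argument on $\{\beta\ge0\}$ this gives $\underline{\beta}_j\ge\min\{0,\hat\beta_j+w_j\hat\sigma\,t^{(n-p)}_{\alpha/2}\}$ and $\overline{\beta}_j\le\max\{0,\hat\beta_j+w_j\hat\sigma\,t^{(n-p)}_{1-\alpha/2}\}$, hence $|C_{\tilde s}|\le2|\hat\beta_j|+2\,t^{(n-p)}_{1-\alpha/2}\,w_j\hat\sigma$. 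Since $\Exp{\hat\beta_j^2}=\tau^2+w_j^2\sigma^2$ and $w_j^2\Exp{\hat\sigma^2}=w_j^2\sigma^2$ are bounded in $n$ (by A1 and A3, $w_j^2\sigma^2\to w_0^2\sigma_\infty^2$) and $t^{(n-p)}_{1-\alpha/2}\to z_{1-\alpha/2}$, we get $\sup_n\Exp{|C_{\tilde s}|^2}<\infty$, which gives the required uniform integrability.

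For (ii) I would fix an arbitrary subsequence and, using A2 and Skorokhod's theorem, pass to a further subsequence realized on one probability space where $\hat\beta_j\to\hat\beta_{j,\infty}$ a.s.\ (coupling the normals), $\hat\sigma^2/n\to\sigma_\infty^2$ a.s.\ (coupling the $\chi^2_{n-p}$ so that $\chi^2_{n-p}/(n-p)\to1$, using A1), and $(\tilde\tau^2,\tilde\sigma^2/n)\to(\tau^2,\sigma_\infty^2)$ a.s.; with A3 this gives $w_j\hat\sigma\to w_0\sigma_\infty$ and $w_j\tilde\sigma\to w_0\sigma_\infty$ a.s., the couplings being compatible because $(\tilde\tau^2,\tilde\sigma^2)$ is independent of $(\hat\beta_j,\hat\sigma^2)$ by C2. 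Because $t_{n-p}$ converges in distribution to $N(0,1)$, whose cdf is continuous, $F_{n-p}\to\Phi$ uniformly on $\mathbb R$; combining this with continuity of $g^{-1}$ gives $\Psi_n(\beta)\to\Psi_\infty(\beta)$ for each fixed $\beta$ (the drifting argument $(\beta-\hat\beta_j)/(w_j\hat\sigma)$ being absorbed by the uniform convergence of $F_{n-p}$). Since $\Psi_\infty$ is continuous, strictly increasing, with $\Psi_\infty(\underline{\beta}_{j,\infty})=0$, for each $\epsilon>0$ we have $\Psi_\infty(\underline{\beta}_{j,\infty}-\epsilon)<0<\Psi_\infty(\underline{\beta}_{j,\infty}+\epsilon)$, so the pointwise limit yields the same strict inequalities for $\Psi_n$ at these two points for all large $n$, and monotonicity of $\Psi_n$ in $\beta$ pins $\underline{\beta}_j\in(\underline{\beta}_{j,\infty}-\epsilon,\underline{\beta}_{j,\infty}+\epsilon)$; thus $\underline{\beta}_j\to\underline{\beta}_{j,\infty}$ a.s., likewise $\overline{\beta}_j\to\overline{\beta}_{j,\infty}$, and so $|C_{\tilde s}|\to|C_s|$ a.s.\ along this subsequence, with $|C_s|$ carrying the fixed prior-predictive law so that $\Exp{|C_s|}$ is the number in the statement. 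Together with the uniform integrability (a statement about marginals only, hence unaffected by the coupling), this gives $\Exp{|C_{\tilde s,n}|}\to\Exp{|C_s|}$ along the subsequence, and since every subsequence of the real sequence $(\Exp{|C_{\tilde s,n}|})_n$ has a further subsequence converging to $\Exp{|C_s|}$, the whole sequence converges to $\Exp{|C_s|}$.

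I expect the main obstacle to be the continuity (stability) of the implicitly defined endpoints — showing that the unique root of $\Psi_n$ moves continuously under simultaneous perturbation of the reference cdf ($F_{n-p}\to\Phi$), of the centering and scaling ($\hat\beta_j$, $w_j\hat\sigma$), and of the spending function ($\tilde s\to s$) — for which monotonicity of $\Psi_n$ in $\beta$ is the crucial structural feature, since it lets mere pointwise (rather than uniform) convergence of $\Psi_n$ suffice. By contrast, the uniform integrability is immediate once one observes that $\tilde s(0)=1/2$ traps each endpoint between $0$ and a naive one-sided $t$-quantile offset from $\hat\beta_j$, and the coupling of the A2 limits and the sampling randomness onto one probability space is routine.
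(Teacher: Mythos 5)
Your proof is correct and shares the paper's overall skeleton (an a.s./in-probability convergence of the interval endpoints, plus a uniform-integrability bound, combined via Vitali's theorem), and your width bound is essentially identical to the paper's auxiliary lemma --- both trap the endpoints using $\tilde s(0)=1/2$ and then get $L_2$-boundedness from the marginal moments of $\hat\beta_j$ and $\hat\sigma$. Where you genuinely diverge is the mechanism for endpoint convergence. The paper writes the upper endpoint as an implicit function $\bar\beta^n = G_n(w\hat\sigma, w\tilde\sigma,\tilde\tau^2,\hat\beta^n)$, splits $|G_n - G|$ into a ``$t$-quantile versus $z$-quantile'' piece (handled by arguing $G_n$ is monotone in the degrees of freedom and invoking Dini's theorem for uniform convergence on compacts) and a continuity-of-$G$ piece; this requires asserting continuity of the implicitly defined $G$ and leaves the Dini step somewhat delicate. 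You instead never define $G$: you exploit strict monotonicity in $\beta$ of the defining equation $\Psi_n(\beta)=0$, so that pointwise convergence $\Psi_n\to\Psi_\infty$ (obtained from Polya's theorem for $F_{n-p}\to\Phi$ uniformly, continuity of $g^{-1}$, and a Skorokhod coupling of the A2 limits) immediately pins the root between $\underline\beta_{j,\infty}\pm\epsilon$. This buys a more elementary and more airtight stability argument for the roots, at the cost of the subsequence/coupling bookkeeping, which you handle correctly since the expectations and the uniform integrability depend only on the laws. Two trivial points: your $\widetilde\Psi_n$ as written is strictly \emph{decreasing} rather than increasing (harmless --- monotonicity is all you use), and, like the paper, you implicitly assume $\tau^2>0$ so that the limiting spending function $g^{-1}(2w_0\sigma_\infty\beta/\tau^2)$ is well defined.
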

A proof is in the appendix. 
The lemma says that under this asymptotic regime, the 
performance of the adaptive  FAB interval is asymptotically equivalent to 
that of the oracle FAB interval: They both have $1-\alpha$ frequentist 
coverage for each $n$, and the prior expected width of the FAB procedure 
approaches that of the oracle FAB interval as $n\rightarrow \infty$.

We now consider how this result applies 
to the linear regression model and the specific estimates  
$\hat \beta_j, \hat\sigma^2, \tilde \tau^2, \tilde \sigma^2$ described in the previous 
subsection.  Consider a sequence of experiments indexed by $n$ such that 
the following conditions hold:
\begin{enumerate}
\item[B1.] For each $n$,   
\begin{itemize}
\item $\bl X$ is full-rank;
\item $\bl y \sim N_n(\bl X \bs\beta, \sigma^2 \bl I)$  with
$\sigma^2 = n \sigma^2_\infty$; 
\item $\bs\beta \sim N(\bl 0,\tau^2 \bl I)$.   
\end{itemize}
\item[B2.] $p/n \rightarrow c \in (0,1)$ as $n\rightarrow \infty$.   
\item[B3.] The empirical distribution of the eigenvalues of 
$\bl X^\top\bl X/n$ is bounded uniformly in $n$, and converges in distribution to a non-degenerate limit as $n\rightarrow \infty$. 
\end{enumerate}

If 
conditions B1, B2 and B3 are met then 
the estimates
$(\hat\beta_j,\hat\sigma^2,\tilde \tau^2, \tilde \sigma^2)$ 
defined in Section 3.1 
satisfy the conditions  C1, C2, A1 and A2 
and so 
the FAB interval 
for  $\beta_j$ of any variable $j$  satisfying condition A3 
will satisfy the
conditions of Theorem \ref{lem:aopt}, and hence be asymptotically optimal. 
To see that this holds, first note 
that  the definition of the model in condition B1 
implies that 
$(\hat\beta_j,\hat\sigma^2,\tilde \tau^2, \tilde \sigma^2)$ 
satisfy the coverage conditions 
 C1 and C2. 
Second, asymptotic condition A1 is met by the 
 definition $\sigma^2 = n \sigma^2_\infty$ in B1
and that $n$ is growing faster 
than $p$ as assumed by  B2. 
The remaining necessary result is the following:
\begin{lem} 
\label{lem:mmle} 
Suppose B1, B2 and B3 hold, and that 
$(\tau^2,\sigma^2_\infty)\in \Theta$, a compact 
subset of $[0,\infty)\times (0,\infty)$. 
Let $(\tilde \tau^2, \tilde \sigma^2)$ be the maximizers over $\Theta$
of the likelihood given by the marginal model 
(\ref{eqn:mmod}).  Then  
$(\tilde\tau^2,\tilde \sigma^2/n ) \rightarrow (\tau^2 ,\sigma^2_\infty)$ 
in probability as $n\rightarrow \infty$. 
\end{lem}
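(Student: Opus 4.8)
\emph{Proof plan.} The plan is to treat $(\tilde\tau^2,\tilde\sigma^2)$ as a constrained maximum-likelihood estimator based on the single Gaussian vector $\bl z_2$ of growing dimension $p-1$, and to establish consistency via the standard argmax (Wald) route: uniform convergence of a normalized log-likelihood together with a well-separated maximum of its limit. First I would reparametrize. Writing $m=p-1$, $\bl M=\bl X_2\bl X_2^\top/n$ and $\bl u=\bl z_2/\sqrt n$, and using $\sigma^2=n\sigma^2_\infty$ from B1, the marginal model (\ref{eqn:mmod}) becomes $\bl u\sim N_m(\bl 0,\,t^2\bl M+v^2\bl I)$ at a parameter value $(t^2,v^2)$ with true value $(t_0^2,v_0^2):=(\tau^2,\sigma^2_\infty)$; the quantity to be shown consistent is $(\tilde t^2,\tilde v^2):=(\tilde\tau^2,\tilde\sigma^2/n)$, the maximizer over the compact set $\Theta$ of the scaled log-likelihood
\[
\ell_m(t^2,v^2)=-\tfrac{1}{2m}\Big[\log\det(t^2\bl M+v^2\bl I)+\bl u^\top(t^2\bl M+v^2\bl I)^{-1}\bl u\Big]-\tfrac12\log 2\pi .
\]

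Next I would control the spectrum of $\bl M$. Since $\bl X$ is full rank (B1), $\bl X\bl X^\top/n$ restricted to its range has the same $p$ eigenvalues as $\bl X^\top\bl X/n$, and since $\bl G_2$ spans a codimension-one subspace of that range, Cauchy interlacing shows the eigenvalues $\lambda_1\ge\cdots\ge\lambda_m$ of $\bl M$ interlace those of $\bl X^\top\bl X/n$. Hence by B3 they lie in a fixed bounded interval for all large $n$, and their empirical distribution converges to the same non-degenerate limit $H$ as that of $\bl X^\top\bl X/n$ (the two empirical CDFs differ by $O(1/m)$ in Kolmogorov distance, so $H$ is not a point mass). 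Because $\Theta\subset[0,\infty)\times(0,\infty)$ is compact, $v^2$ is bounded below on $\Theta$, so $(t^2\bl M+v^2\bl I)^{-1}$ and $(t^2\bl M+v^2\bl I)^{-1}(t_0^2\bl M+v_0^2\bl I)$ have operator norms bounded uniformly over $\Theta$ and over large $n$.

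Then I would prove $\sup_\Theta|\ell_m-L|\to 0$ in probability, where $L(t^2,v^2)=-\tfrac12\int[\log(t^2\lambda+v^2)+(t_0^2\lambda+v_0^2)/(t^2\lambda+v^2)]\,dH(\lambda)-\tfrac12\log 2\pi$. Writing $\bl u$ in the eigenbasis of $\bl M$ gives $\Exp{\ell_m(t^2,v^2)}=-\tfrac1{2m}\sum_i[\log(t^2\lambda_i+v^2)+(t_0^2\lambda_i+v_0^2)/(t^2\lambda_i+v^2)]-\tfrac12\log 2\pi$, which converges to $L$ uniformly on $\Theta$ by the weak convergence of the empirical spectrum and the equicontinuity in $(t^2,v^2)$, uniform in $\lambda$ over the bounded support, of the two integrands. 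For the stochastic part, with $\bl A=(t^2\bl M+v^2\bl I)^{-1}$ and $\bs\Sigma_0=t_0^2\bl M+v_0^2\bl I$, $\ell_m-\Exp{\ell_m}=-\tfrac1{2m}(\bl u^\top\bl A\bl u-\tr(\bl A\bs\Sigma_0))$, and since $\bl u$ is Gaussian $\Var{\bl u^\top\bl A\bl u}=2\tr((\bl A\bs\Sigma_0)^2)=O(m)$, so this part is $o_P(1)$ pointwise; uniformity follows from compactness of $\Theta$, the Lipschitz dependence of $\bl A$ on $(t^2,v^2)$ in operator norm, and $\|\bl u\|^2/m=O_P(1)$. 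For identification, for fixed $\lambda$ the function $x\mapsto\log x+c/x$ with $c=t_0^2\lambda+v_0^2$ is uniquely minimized over $x>0$ at $x=c$, so the integrand defining $L$ is pointwise maximized when $t^2\lambda+v^2=t_0^2\lambda+v_0^2$; hence $L(t^2,v^2)\le L(t_0^2,v_0^2)$ with equality only if $t^2\lambda+v^2=t_0^2\lambda+v_0^2$ for $H$-almost every $\lambda$, and since $H$ has at least two support points this forces $(t^2,v^2)=(\tau^2,\sigma^2_\infty)$, which lies in $\Theta$ by hypothesis. Thus $L$ has a unique, well-separated maximum on $\Theta$ at $(\tau^2,\sigma^2_\infty)$, and combining this with the uniform convergence above, a standard argmax consistency theorem yields $(\tilde\tau^2,\tilde\sigma^2/n)\to(\tau^2,\sigma^2_\infty)$ in probability.

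The hard part, I expect, is the uniform-convergence step: because $\bl z_2$ is a single Gaussian vector of growing dimension $p-1$ rather than an i.i.d.\ sample, consistency must rest on concentration of the quadratic form $\bl u^\top\bl A\bl u$ and on transferring condition B3 from the spectrum of $\bl X^\top\bl X/n$ to that of the compression $\bl M=\bl X_2\bl X_2^\top/n$ and thence to the criterion, uniformly over $\Theta$. The interlacing and variance bounds above make this routine once the eigenvalues of $\bl M$ are known to remain in a fixed compact interval, which is the single place where B3 is essential.
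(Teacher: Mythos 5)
Your proposal is correct and follows essentially the same route as the paper: diagonalize in the eigenbasis of $\bl X_2\bl X_2^\top/n$, establish uniform convergence of the normalized log-likelihood over the compact $\Theta$ via the Gaussian quadratic-form variance bound together with Lipschitz/equicontinuity estimates, identify the unique optimizer of the limiting criterion, invoke a standard argmax-consistency theorem, and use Cauchy interlacing to transfer B3 to the compressed spectrum, exactly as in the paper's proof. The only notable difference is the identification step, where you observe that $x\mapsto \log x + c/x$ is uniquely minimized at $x=c$ so that non-degeneracy of the limiting spectral law forces $(t^2,v^2)=(\tau^2,\sigma^2_\infty)$, whereas the paper solves the critical-point equations and checks positive definiteness of the Hessian via a moment inequality; your version is a cleaner, equally valid way to handle that one step.
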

This result is proven in the appendix. 
Putting Lemma \ref{lem:aopt} and Lemma \ref{lem:mmle}  gives the following 
summary of the asymptotic behavior of $C_{\tilde s}$:
\begin{thm}
Under the conditions of Lemma \ref{lem:mmle}, 
for any variable $j$ for which 
A3 holds,
the FAB interval 
$C_{\tilde s}(\hat\beta_j ,\hat\sigma^2 )$ has 
prior expected width  $\Exp{ | C_{\tilde s} | }$ that satisfies 
$\Exp{ | C_{\tilde s} | }  \rightarrow   
\Exp{ | C_{s} | }$  as $n\rightarrow \infty$,
where $C_{s}$ is the Bayes-optimal FAB procedure for the case that
$\hat\beta_j \sim N(\beta_j , w_0^2 \sigma_\infty^2)$
and $\beta_j\sim N(0,\tau^2)$.
\end{thm}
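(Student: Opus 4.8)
The plan is to observe that this theorem is nothing more than the conjunction of Lemma \ref{lem:aopt} and Lemma \ref{lem:mmle}, so the work consists of checking that, under B1--B3 together with the standing hypothesis that A3 holds for the variable $j$ in question, every hypothesis of Lemma \ref{lem:aopt} --- namely conditions C1, C2, A1, A2, A3 --- is satisfied by the regression estimators $(\hat\beta_j,\hat\sigma^2,\tilde\tau^2,\tilde\sigma^2)$ constructed in Section 3.1, and then to quote Lemma \ref{lem:aopt} directly.

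First I would dispatch the coverage conditions. C1 is just the standard sampling theory for the normal linear model under B1: $\hat\beta_j=\bl a^\top\bl y\sim N(\beta_j,w_j^2\sigma^2)$, $(n-p)\hat\sigma^2/\sigma^2\sim\chi^2_{n-p}$, and $\hat\beta_j\perp\hat\sigma^2$. C2 follows from the orthogonal decomposition $\bl y=\bl y_0+\bl y_1+\bl y_2$ with mutually orthogonal projections $\bl P_0,\bl P_1,\bl P_2$: since $\hat\beta_j$ is a function of $\bl y_1$, $\hat\sigma^2$ is a function of $\bl y_0$, and $(\tilde\tau^2,\tilde\sigma^2)$ are (by construction) functions of $\bl z_2=\bl G_2^\top\bl y_2$ alone, independence of the three blocks of $\bl y$ forces $(\tilde\tau^2,\tilde\sigma^2)\perp(\hat\beta_j,\hat\sigma^2)$. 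Next, A1: the definition $\sigma^2=n\sigma^2_\infty$ in B1 gives $\sigma^2/n\to\sigma^2_\infty>0$, and B2 gives $n-p=n(1-p/n)\to\infty$ because $p/n\to c\in(0,1)$. Then A2 is \emph{exactly} the conclusion of Lemma \ref{lem:mmle}, whose hypotheses --- B1, B2, B3 and $(\tau^2,\sigma^2_\infty)\in\Theta$ compact with $(\tilde\tau^2,\tilde\sigma^2)$ the maximizers of the marginal likelihood (\ref{eqn:mmod}) over $\Theta$ --- are precisely the ``conditions of Lemma \ref{lem:mmle}'' invoked in the statement of this theorem. Finally, A3 is assumed outright for the variable $j$ under consideration.

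With C1, C2, A1, A2, A3 all verified, Lemma \ref{lem:aopt} applies verbatim and yields $\Exp{|C_{\tilde s}|}\to\Exp{|C_s|}$, where $C_s$ is the Bayes-optimal FAB procedure for $\hat\beta_j\sim N(\beta_j,w_0^2\sigma^2_\infty)$ with $\beta_j\sim N(0,\tau^2)$; the limiting variance scale $w_0^2\sigma^2_\infty$ is identified by combining A3 ($w_j^2 n\to w_0^2$) with A1 ($\sigma^2/n\to\sigma^2_\infty$). This is the claimed conclusion.

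The analytic content has already been spent inside the two lemmas (a uniform law of large numbers / concentration argument for the Gaussian marginal log-likelihood in Lemma \ref{lem:mmle} under the $p\asymp n$ spectral regime B3, and a continuity-of-endpoints plus dominated-convergence argument for the interval width in Lemma \ref{lem:aopt}), so the theorem itself is essentially bookkeeping. The only point demanding a little care is cosmetic rather than substantive: one must check that the ``generic scalar sequence'' notation in which Lemma \ref{lem:aopt} is phrased --- a triangular array of statistics $(\hat\beta_j,\hat\sigma^2,\tilde\tau^2,\tilde\sigma^2)$ obeying C1, C2, A1--A3 --- is matched term-for-term by the regression-specific estimators, and that the object $\Exp{|C_{\tilde s}|}$ in the present statement is the same joint expectation (over $\hat\beta_j$, $\hat\sigma^2$, the adaptive estimates $\tilde\tau^2,\tilde\sigma^2$, and $\beta_j\sim N(0,\tau^2)$) that Lemma \ref{lem:aopt} controls, so that no further interchange of limit and expectation is introduced at this step.
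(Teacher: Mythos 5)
Your proposal is correct and follows essentially the same route as the paper: the theorem is proved by verifying C1, C2, and A1 from B1--B2, obtaining A2 from Lemma \ref{lem:mmle}, assuming A3 for the variable $j$, and then invoking Lemma \ref{lem:aopt}. This matches the paper's own argument in Section 3.2 term for term.
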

This result makes precise the heuristic idea that if $n$ and $p$  
are large, then the adaptive  FAB interval should be 
nearly as good as the oracle FAB interval.

\section{Numerical examples}

In this section we illustrate the adaptive FAB  
procedure numerically, and show how it can be modified 
to accommodate different adaptation strategies. For example, 
in the next subsection we use an empirically estimated 
prior distribution that is not centered around zero, 
thereby providing improved performance if most of 
the effects are of a common sign.  In the following subsection, 
we show how adaptation may be done separately for 
different groups of parameters, such as main effects and interactions. 
We also provide a simulation study that illustrates how a 
CIP that adapts to sparsity may have very poor coverage if the 
regression parameter is not actually sparse, whereas the 
adaptive FAB procedure maintains constant coverage for all parameter values.

\subsection{Motif regression}
\citet{conlon_etal_2003}
measured the binding intensity of a protein to each of 
$n=287$ DNA segments, and related each intensity to 
scores measuring abundance of the DNA segment in 
$p=195$ genetic motifs. These data were also used 
as an example by 
\citet{meinshausen_meier_buhlman_2009}, among others. 

Assuming a normal linear regression model for the centered and scaled data, 
the usual unbiased estimate of $\sigma^2$ is 
0.77, 
and the usual standard errors for the OLS regression 
coefficients range from 0.12 to 0.85 with a mean of 
0.30. On the other hand, 
empirical 
Bayes estimates of $\mu$ and $\tau^2$ under the prior 
 $\bs \beta \sim N_p(\mu \bl 1  , \tau^2 \bl I)$ are around 
0.004 and 0.001 respectively, 
($\hat \tau \approx  0.036$) 
suggesting that the 
true values of the elements of $\bs\beta$ are highly concentrated 
around zero. 

We constructed 95\% FAB confidence intervals for the 
effects of the $p=195$ genetic motifs, using the  adaptive FAB
procedure described in Section 3.1
except under a $N_p(\mu \bl 1 , \tau^2 \bl I ) $ distribution 
for $\bs \beta$.  This is to allow for the possibility 
that the distribution of true effects is not centered around zero, 
which seems reasonable for this particular dataset where it is 
expected that abundance has either a positive or negligible 
effect on binding intensity. 
In the analysis that follows, for each 
coefficient $j$, values of $(\tilde\mu, \tilde\tau^2, \tilde \sigma^2)$ 
are estimated from the $j$-specific vector $\bl y_2$ defined in 
Section 3.1, thereby ensuring that 
$\hat\beta_j$ is independent of  $(\tilde\mu, \tilde\tau^2, \tilde \sigma^2)$
and constant coverage of the FAB confidence interval for each $\beta_j$ is maintained. 

The intervals are shown graphically in Figure \ref{fig:motif}, 
along with the UMAU intervals for comparison. 
The FAB intervals are shorter than the UMAU intervals for 189 of the 195 effects (97\%), 
with relative widths ranging from 0.83 to 1.11, 
and being 
0.85
on average across effects. 
The number of ``significant'' effects identified by the two procedures 
is similar: twelve of the FAB CIs and eleven of the UMAU CIs do not contain 
zero. However, the two procedures identify somewhat different 
significant motifs: seven motifs are identified by both procedures, 
all with positive OLS effect
estimates. The FAB procedure identifies
an additional five motifs all with positive effect estimates, whereas 
the four additional motifs identified by the UMAU procedure 
all have negative effect estimates.

\begin{figure}
\centerline{\includegraphics[width=6.25in]{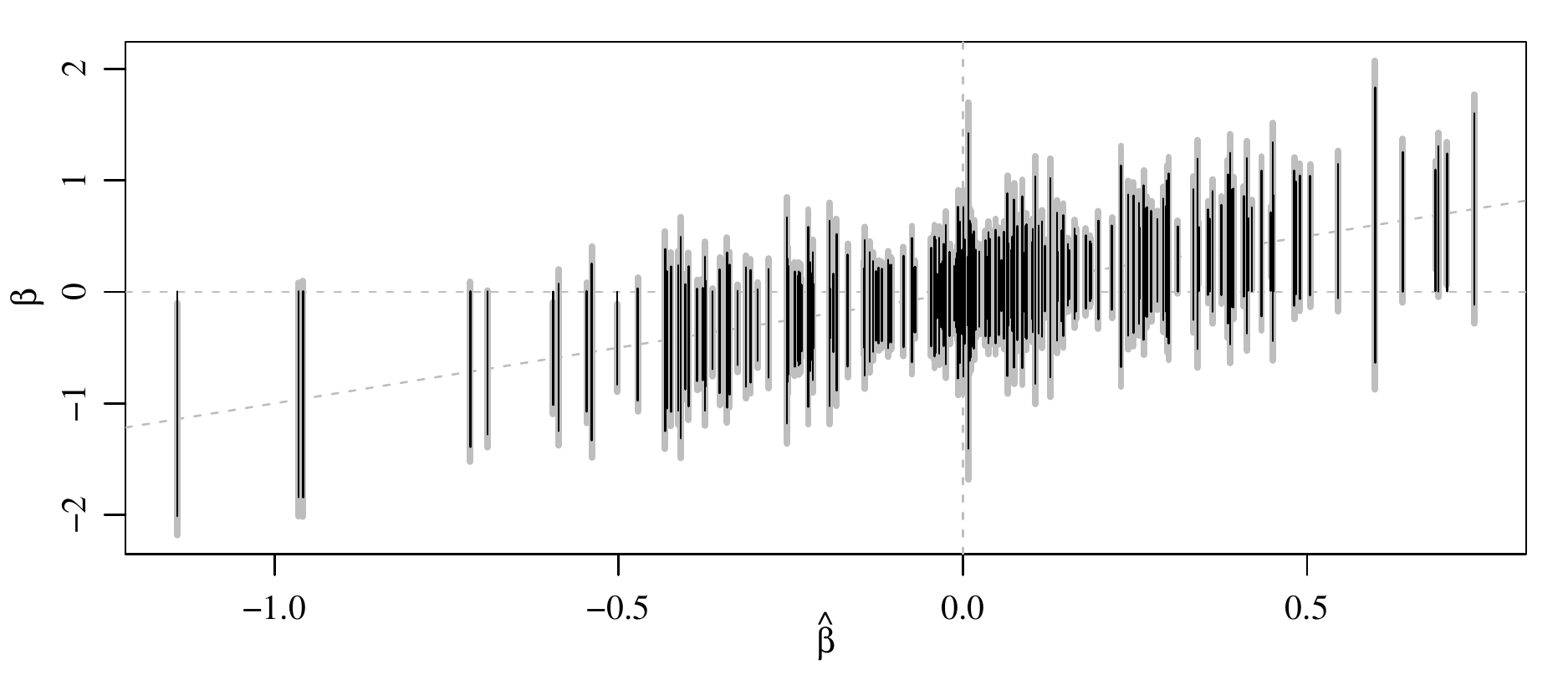}} 
\caption{ 95\% confidence intervals for  motif regression 
effects. UMAU intervals are thick gray lines, 
FAB intervals are thin black lines. }
\label{fig:motif}
\end{figure}

\subsection{Motif regression simulation study} 

\citet{zhang_zhang_2014} developed a confidence interval 
procedure for sparse parameters in high-dimensional 
normal linear regression models. 
When applied to the motif dataset, this
low dimensional projection (LDP) 
procedure produces 
intervals that are narrower than the FAB intervals for 
all regression coefficients, with 
relative widths ranging from 0.27 to 0.71,
and being
about half as wide 
on average across coefficients.
However, unlike the FAB and UMAU procedures, the actual coverage 
rates of LDP intervals  are guaranteed to 
achieve their nominal rates only  asymptotically, 
and only if certain sparsity conditions on $\bs \beta$ are met. 

To compare the performance of the UMAU, FAB and LDP procedures  we constructed 
two related simulation studies based on the motif binding dataset 
described in the previous subsection.  
In each study, we obtained  estimates $({\bs\beta}_0,\sigma^2_0)$ 
from the real data $\bl y$ and $\bl X$, and used these  estimates
to simulate  new response vectors 
$\bl y^{(k)} \sim N( \bl X {\bs \beta}_0 ,\sigma^2_0 \bl I)$ 
independently for $k=1,\ldots, 5000$.  
For each response vector $\bl y^{(k)}$ we construct 
UMAU,  FAB and LDP confidence intervals for each 
of the $p=195$ regression coefficients. These intervals are used to 
obtain Monte Carlo approximations to the finite-sample 
coverage rates of the LDP procedure, as well as 
approximations to the 
expected interval widths of the UMAU, FAB  and LDP procedures. 

In the first of these two simulation studies 
we simulated 5000 datasets from the model 
$\bl y^{(k)} \sim N_n( \bl X \bs \beta_0, \sigma^2_0 \bl I)$, 
where $\bl X$ is the original design matrix 
and  $\bs\beta_0$ is the lasso 
estimate from the original data, using an empirical Bayes
estimate of the $L_1$-penalty parameter.  
This resulted in
a sparse ${\bs \beta}_0$-vector with  176  of the 195 coefficients
being identically zero, so in the context of this simulation study, 
the ``truth'' is highly sparse. 
The value of $\sigma^2_0$ used to simulate the data 
was the usual unbiased estimate 
from the original data. 
We computed the UMAU, FAB and LDP confidence intervals
for each of the 
5000 simulated datasets.
The  widths  of the 
FAB and LDP intervals were 85\% and  43\% of the UMAU interval widths
respectively, on average across datasets and parameters. 
The empirical coverage rates of the
nominal 95\% LDP intervals ranged between 93.8 and 96.1 percent.
There was some evidence that the coverage rates were not 
exactly 95\%: Exact level-.05 binomial tests rejected the hypothesis 
that the coverage rates were 95\% for 71 of the 195 
regression parameters (36\%). All of these 71 parameters
had true values of 0, and the empirical coverage rates 
of 64 of these 71 parameters were larger than 95\%,
suggesting that LDP intervals slightly overcover 
$\beta_j$ when it is zero. However, in general  the coverage 
rates of the LDP procedure were very close to the nominal rates, 
in this case where the truth is sparse.

The second simulation study was the same as the first except the 
value $\bs\beta_0$ used to generate the simulated data was the OLS estimate 
from the original data, and so in this case the ``true'' 
regression model is not sparse.
On average across the 5000 simulated datasets and 195 parameters, 
the widths  of the
FAB and LDP intervals were 88\% and  54\% of the UMAU interval widths
respectively, similar to the results from the first study.  
These 
relative widths are shown in the left panel of Figure \ref{fig:motif_sim}. 
However, the coverage rates for the LDP intervals were generally
far from their nominal levels: Based on  exact binomial tests, 
coverage rates for 
183 of the 195 parameters were significantly different from 95\% 
(at level 0.05). 
As shown in the right panel of Figure \ref{fig:motif_sim}, 
the LDP intervals generally overcover parameter values near zero, and 
greatly undercover parameters larger in magnitude.
For comparison, the empirical coverage rates of the FAB 
intervals are also shown. These rates show no evidence of 
deviation from the nominal rates, as should be the case - the FAB 
intervals have exact 95\% coverage for each component of $\bs\beta$ 
by construction.

\begin{figure}
\centerline{\includegraphics[width=6.5in]{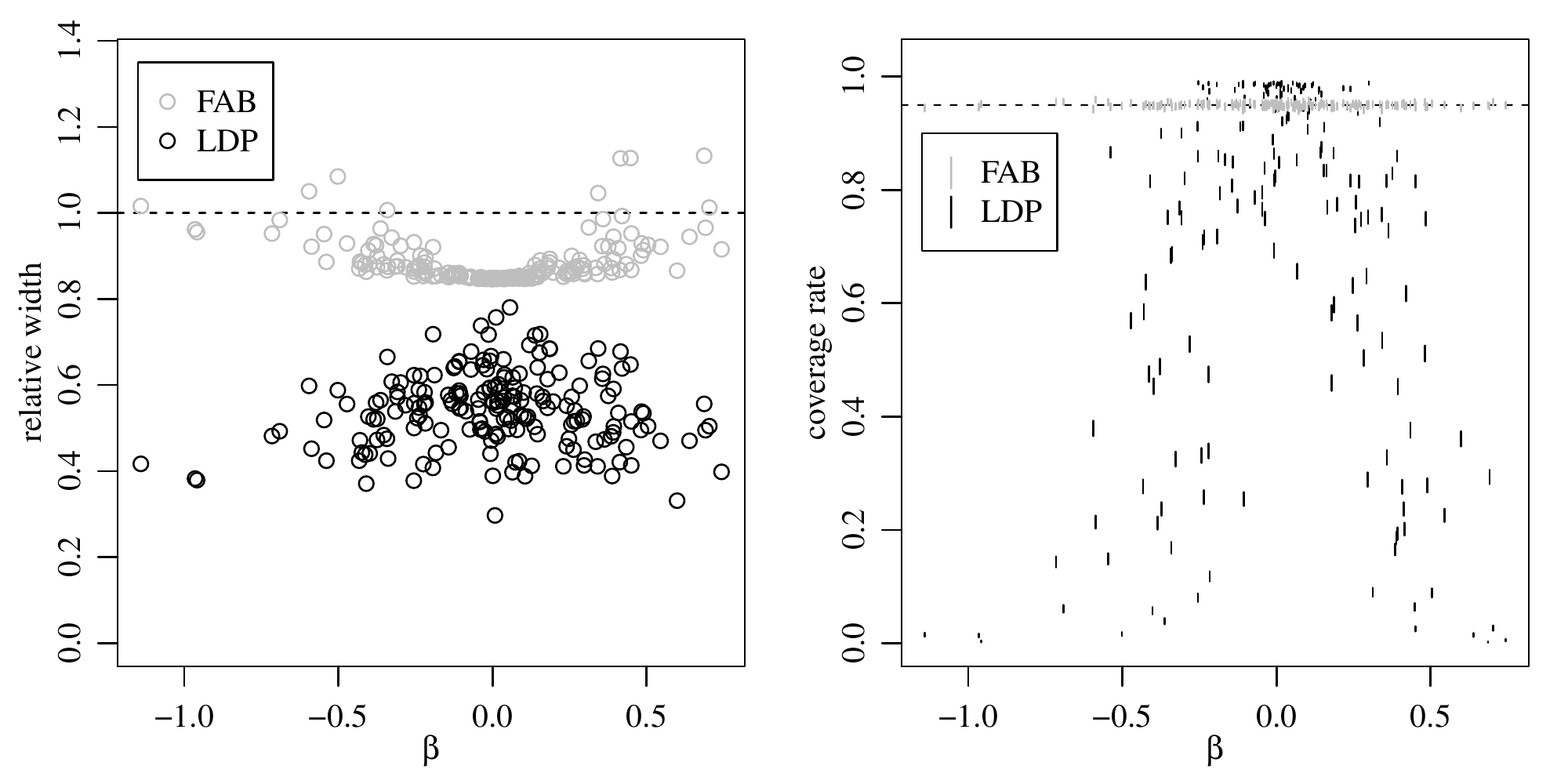}}
\caption{Relative interval widths (left panel) and coverage rates (right panel)
for the motif simulation study with  non-sparse $\bs\beta$. Error bars for the coverage rates are
 Clopper-Pearson 95\% intervals.    }
\label{fig:motif_sim}
\end{figure}

\subsection{Diabetes progression} 

\citet{efron_etal_2004} 
considered parameter estimation for a model of 
diabetes progression from data on 
ten explanatory variables from each of 
$n=442$ subjects.
The expected progression of a subject was assumed to be 
a linear function of the linear, quadratic and two-way 
interaction effects of the ten variables, resulting 
in a linear model with 
$p=64$  regressors total (the binary sex variable does not 
have a separate quadratic effect). 

We generally expect that main effects will be larger than 
quadratic effects and two-way interactions. For this 
reason, it makes sense to obtain adaptive intervals separately 
for these three types of parameters, so that the 
spending function $\tilde s$  used to obtain the confidence interval 
for  the effect of a given regressor is obtained adaptively 
from the estimated effects of regressors in the same category. 
This can easily be done as follows: Write the design matrix 
$\bl X $ as $\bl X  = [  \bl X_1 , \bl X_2 , \bl X_3 ]$, 
where $\bl X_1$, $\bl X_2$, $\bl X_3$ are the design matrices 
corresponding to the main effects, quadratic effects and 
two-way interactions, respectively, and 
let 
$\bs \beta^\top = [ \bs\beta^\top_1 , \bs\beta^\top_2 , \bs\beta^\top_3 ]$
be the corresponding partition of $\bs \beta$. 
To obtain the FAB CIs for the main effects, we 
let $\bl G$ be an orthonormal basis for the null space of 
$[ \bl X_2  , \bl X_3 ]$. Letting 
$\tilde {\bl y} = \bl G^\top \bl y$ and $\tilde {\bl X} = \bl G^\top \bl X$, we have
$\tilde{ \bl y} 
\sim N_{n-p_2-p_3} ( \tilde{\bl X} \bs\beta_1 , \sigma^2 \bl I) 
$. 
We can then apply the FAB CI procedure to $(\tilde{ \bl y}, \tilde{\bl X})$
to obtain intervals that adapt to the magnitude of $\bs\beta_1$ 
(and not to the magnitudes of $\bs\beta_2$ and $\bs\beta_3$). 
Adaptive confidence intervals for 
$\bs\beta_2$ and $\bs\beta_3$ can be obtained analogously. 

In the analysis that follows we use
an adaptively estimated  $N(0,\tau^2)$ prior distribution 
for each coefficient. 
Recall that our FAB procedure generates an  empirical Bayes estimate  $\tilde \tau^2$
of $\tau^2 =\Var{\beta_j}$ for each coefficient $j$ 
that is statistically independent of the OLS estimate 
$\hat\beta_j$.  For the main effects the values of $\tilde\tau$ ranged 
between 0.19 and 0.21, with a mean of 0.20, and were larger than 
the standard errors of the OLS  coefficients except for those of four somewhat 
co-linear predictors. 
In contrast, values of $\tilde \tau$ for the 
quadratic and interaction terms were all less than 0.03, 
and were all less than the corresponding standard errors. 

We computed the adaptive FAB interval 
for each regression coefficient using these coefficient-specific 
estimates of $\tau^2$. 
The FAB intervals are as narrow or narrower than all but three of the 
corresponding UMAU intervals, with the relative interval 
widths ranging from 0.84 to 1.0003, and being 0.86 on average. 
The FAB CIs for the main effects are essentially the same 
as the UMAU CIs, whereas the FAB CIs for the 
quadratic and interaction terms are all narrower than 
the corresponding UMAU intervals, by about 16\% on average. 
This example illustrates some flexibility of the FAB procedure, 
in that the adaptation for a particular parameter may be 
based on a subset of the data information that is deemed 
most relevant for that parameter.


\section{Discussion}
We have constructed  a class of $1-\alpha$ confidence interval
procedures (CIPs) for individual regression coefficients  of 
the normal regression model $\bl y \sim N_n(\bl X \bs\beta ,
 \sigma^2 \bl I )$. Each member of this class corresponds to 
a spending function $s:\mathbb R\rightarrow [0,1]$.  
Under the regression model, every member of the 
class has constant $1-\alpha$ coverage  
for all possible values of $\bs \beta$, $\sigma^2$ and 
full-rank design matrices $\bl X$.  
We have described 
a method of adaptively selecting 
the spending function 
so that the across-parameter average interval width is 
reduced, and the $1-\alpha$ coverage rate is 
maintained for each regression coefficient. 
The coverage guarantee is 
non-asymptotic,  does not rely on $\bs\beta$ being sparse 
and does not rely on conditions on the design matrix. 
However, under some assumptions on the distribution 
of the elements of $\bs\beta$ and the design matrix, 
the adaptive technique we propose is asymptotically optimal 
as both $n$ and $p$ increase. 

The spending function $s(\beta)$ that we adaptively 
estimate from the data is based on 
a normal prior distribution 
for the elements of $\bs\beta$. As 
such, we expect our procedure to provide the most improvement 
when the empirical distribution of $\beta_1,\ldots, \beta_p$ 
is approximately normal. 
If instead we suspect that $\bs\beta$ is sparse, it may seem 
preferable to 
base the adaptation 
on other families of prior distributions, 
such as Laplace  or ``spike and slab'' distributions. 
Some numerical work not presented here suggests 
that FAB intervals obtained using the  Laplace  
family of priors are in practice similar to those 
obtained with normal priors. 
However, FAB procedures based on spike and slab priors 
do seem more efficient but also present a problem: 
The spending function for a spike and slab prior 
is not generally nondecreasing, and so by Theorem 1 the 
corresponding confidence region may not be an 
interval. We suspect that non-interval confidence regions 
have limited appeal in practice, but even if they 
were of interest they 
present the numerical challenge of identifying 
multiple disconnected sets of parameter values 
to include in the region. 

We conjecture that the LDP interval procedure 
proposed by \citet{zhang_zhang_2014} 
could be related to a FAB procedure based on a sparsity-inducing 
prior distribution, as both procedures should be asymptotically 
optimal under a sparse regime. The fact that a FAB procedure 
based on such a prior might produce non-interval confidence 
regions might partly explain why the LDP procedure fails when 
$\bs \beta$ is non-sparse: If a sparse $1-\alpha$ FAB procedure 
yields a non-interval region but only  a sub-interval is 
numerically identified, 
then the coverage rate will be below $1-\alpha$.

\medskip

Adaptive FAB intervals for linear regression coefficients  
may be computed using the {\sf R}-package
{\tt fabCI}.
Complete replication code for the numerical examples in this article
is available at the first author's website. 
This research was partially supported by NSF grant DMS-1505136.

\appendix

\section*{Proofs}

\subsection*{Proof of Theorem 1}
Items 1 and 2 of the theorem were shown in \citet{yu_hoff_2016}. 
To prove Item 3, suppose $s(\theta)$ is not nondecreasing so 
that there exists $\theta_1<\theta_2$ with $s(\theta_1) >s(\theta_2)$. 
We will show that there are a range of values of $(\hat\theta,\hat\sigma)$
with $\hat\theta<\theta_1$
for which $\theta_2$ (and $\hat \theta$) are in $C_s(\hat\theta,\hat\sigma)$ 
but $\theta_1$ is not. 
Let 
$\underline t_j = t_{\alpha (1-s(\theta_j))}$ and 
$\bar t_j = t_{1-\alpha s(\theta_j)}$
for $j=1,2$.   
Both $t_{\alpha(1- s)}$ and $t_{1-\alpha s}$ are decreasing 
in $s$  so 
$\underline t_1 < \underline t_2 < \bar t_1 <\bar t_2$. 
For $\theta_2$ to be in the confidence region and 
$\theta_1$ not to be, we need  
$\hat\theta + \hat\sigma \underline t_2 < \theta_2 <
 \hat\theta + \hat\sigma \bar t_2$ and 
$\theta_1> \hat\theta + \hat\sigma \bar t_1$, or equivalently
\begin{align}
\bar t_1  & < (\theta_1 -\hat\theta)/\hat\sigma  \label{eqn:t1b} \\
 \underline t_2  & <  (\theta_2 - \hat\theta)/\hat\sigma < \bar t_2.  \label{eqn:t2b}
\end{align}
The set of values  $(\hat\theta ,\hat\sigma)$ for which this holds 
has positive Lebesgue measure on $(-\infty, \theta_1) \times (0,\infty)$. 
For example,  both $(\theta_1 -\hat\theta)/\hat\sigma$ and 
$(\theta_2 -\hat\theta)/\hat\sigma$ can be made 
simultaneously arbitrarily close to 
any number between $\bar t_1 \wedge \underline t_2$ and $\bar t_2$
by taking 
$\hat\sigma$ and $-\hat\theta$ to be sufficiently  large. 
The probability of observing values of $(\hat\theta,\hat\sigma)$ 
satisfying (\ref{eqn:t1b}) and (\ref{eqn:t2b})
that yield a  non-interval confidence region
is therefore  greater than zero.

\subsection*{Proof of Lemma 1}
For notational convenience, in this proof 
we write $\beta_j$ and $w_j$ as $\beta$ and $w$, and 
write the $\alpha$-quantile of the $t_q$ distribution as $t(\alpha)$, 
suppressing the index that denotes the degrees of freedom.  
We begin the proof of Lemma 2 with another lemma:
\begin{lem}
        The width $|C_{\tilde s}|$ of $C_{\tilde s}$ satisfies
                $|C_{\tilde s} |  < |\hat \beta| +  w \hat \sigma(  |t(\alpha/2)|+|t(1-\alpha/2)|) $. 
\end{lem}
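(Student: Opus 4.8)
The plan is to write $|C_{\tilde s}|=\overline\beta-\underline\beta$ in terms of the interval's endpoints and bound each endpoint using the monotonicity of the spending function. Because $\tilde s(\beta)=g^{-1}(2w\tilde\sigma\beta/\tilde\tau^2)$ is nondecreasing, item~2 of Theorem~\ref{prp:tprop} gives that with probability one $C_{\tilde s}=[\underline\beta,\overline\beta]$ is an interval whose endpoints solve the equations~(\ref{eqn:endpoints}), namely $\underline\beta=\hat\beta+w\hat\sigma\,t(\alpha(1-\tilde s(\underline\beta)))$ and $\overline\beta=\hat\beta+w\hat\sigma\,t(1-\alpha\tilde s(\overline\beta))$, with $\underline\beta<\hat\beta<\overline\beta$ when $\alpha<1/2$. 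The only structural fact I would use is that, since $g^{-1}(0)=1/2$, the spending function satisfies $\tilde s(\beta)\ge1/2$ for $\beta\ge0$ and $\tilde s(\beta)\le1/2$ for $\beta\le0$. Because $t(\cdot)$ is increasing and $t(1-\alpha/2)>0>t(\alpha/2)$ for $\alpha<1/2$, this yields the one-sided bounds $\overline\beta\le\hat\beta+w\hat\sigma\,|t(1-\alpha/2)|$ whenever $\overline\beta\ge0$, and $\underline\beta\ge\hat\beta-w\hat\sigma\,|t(\alpha/2)|$ whenever $\underline\beta\le0$.

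Next I would split into three cases according to the sign pattern of the endpoints, exploiting $\underline\beta<\overline\beta$. If $\underline\beta\le0\le\overline\beta$, both one-sided bounds apply, and subtracting them gives $|C_{\tilde s}|\le w\hat\sigma(|t(\alpha/2)|+|t(1-\alpha/2)|)\le|\hat\beta|+w\hat\sigma(|t(\alpha/2)|+|t(1-\alpha/2)|)$. If $0<\underline\beta$, the interval is entirely positive, so $|C_{\tilde s}|=\overline\beta-\underline\beta<\overline\beta$, and since $\overline\beta>0$ the first bound gives $|C_{\tilde s}|<\hat\beta+w\hat\sigma\,|t(1-\alpha/2)|\le|\hat\beta|+w\hat\sigma\,|t(1-\alpha/2)|$. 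Symmetrically, if $\overline\beta<0$ then $|C_{\tilde s}|<-\underline\beta\le-\hat\beta+w\hat\sigma\,|t(\alpha/2)|\le|\hat\beta|+w\hat\sigma\,|t(\alpha/2)|$. In each case the asserted bound holds; strictness is immediate in the last two cases, and in the first it follows because $|\hat\beta|\ge0$ and, in the only borderline subcase $\hat\beta=0$, the relation $\overline\beta>\hat\beta=0$ forces $\tilde s(\overline\beta)>1/2$ (as $\tilde\sigma^2>0$), so that subtracting the one-sided bounds is already strict.

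The one point requiring care is keeping the coefficient of $|\hat\beta|$ equal to one rather than two: bounding $\overline\beta\le|\hat\beta|+\cdots$ and $-\underline\beta\le|\hat\beta|+\cdots$ separately and adding would double it. The remedy is exactly the case split above---$\underline\beta<\overline\beta$ prevents the interval from lying strictly on both sides of $0$, so the crude $|\hat\beta|$ term, which is needed only for an endpoint on the far side of $0$ (where the structural bound on $\tilde s$ is unavailable), is charged on at most one end, while the straddling case needs it on neither.
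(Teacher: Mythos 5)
Your proof is correct and follows essentially the same route as the paper's: both use the endpoint equations together with the observation that $\tilde s(\beta)\gtrless 1/2$ according to the sign of the endpoint to obtain one-sided bounds, and then split into cases on the signs of $\underline\beta$ and $\overline\beta$ so that the $|\hat\beta|$ term is charged on at most one end. Your explicit handling of strictness and of the ``coefficient one rather than two'' issue is slightly more careful than the paper's write-up, but the underlying argument is the same.
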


\begin{proof}
Recall that the endpoints $\underline \beta$ and $\bar \beta$ of $C_{\tilde s}$ are solutions to
\begin{align*}
 \hat\beta & = \bar \beta - w \hat \sigma t(1- \alpha \tilde s(\bar \beta) ) \\
\hat \beta & =   \underline \beta - w \hat \sigma t(\alpha (1-\tilde s(\underline \beta))) 
\end{align*}
        Here $\tilde s (\beta)$ is defined as $
        \tilde s(\beta)  = g^{-1} ( 2w \tilde \sigma \beta/ {\tilde \tau}^2 )$, where $g(s)  =  \Phi^{-1}(\alpha s) - \Phi^{-1}(\alpha(1-s) )$.
        At the upper endpoint, we have $\tilde s(\bar \beta) = F((\hat \beta-\bar \beta)/(w \hat \sigma))   /\alpha$,
        where $F$ is the CDF of the $t_q$-distribution.
        When $\bar \beta > 0$, we have $\tilde s(\bar \beta) > g^{-1}(0) = 1/2$. Thus $\bar \beta < \hat \beta - w \hat \sigma t(\alpha/2)$. Also, $g^{-1} ( 2w \tilde \sigma \bar \beta/ {\tilde \tau}^2 ) < 1$, so $\bar \beta > \hat \beta - w \hat \sigma t(\alpha)$. When $\bar \beta < 0$, $\hat \beta - w \hat \sigma t(\alpha/2)< \bar \beta$. This implies that
        \begin{align*}
                &\hat \beta - w \hat \sigma t( \alpha)< \bar \beta < \hat \beta  - w \hat \sigma t( \alpha /2 )          & \ \text{if} \ \bar \beta > 0\\
                &\hat \beta - w \hat \sigma t( \alpha /2)<  \bar \beta <0         & \  \text{if} \  \bar \beta < 0. 
        \end{align*}
        Similarly we have
        \begin{align*}
                & 0 <\underline \beta < \hat \beta - w \hat \sigma t(1 - \alpha /2 )          & \ \text{if} \   \underline \beta  > 0\\
                & \hat \beta - w \hat \sigma t(1 - \alpha /2)<\underline \beta < \hat \beta - w \hat \sigma t(1 - \alpha )          & \ \text{if} \  \underline \beta < 0. 
        \end{align*}
        Therefore
        \begin{equation*}
                |C_{\tilde s}| =\bar \beta - \underline \beta < |\hat \beta| + w \hat \sigma(  |t(\alpha/2)|+|t(1-\alpha/2)|).
        \end{equation*}
\end{proof}

Now we prove Lemma 1. We denote the endpoints of the oracle CIP $C_s$ as $\bar \beta$ and $\underline \beta$, which are the solutions to
        \begin{equation*} 
                \begin{array}{l}
                        \bar \beta - w_0 \sigma_\infty \Phi^{-1}(1- \alpha s(\bar \beta) )=\hat \beta  \\
                        \underline \beta - w_0 \sigma_\infty \Phi^{-1}( \alpha (1-s(\underline \beta))) =\hat \beta.\\
                \end{array}
        \end{equation*}
 We denote the endpoints of $C_{\tilde s}$ as $\bar \beta^n$ and $\underline \beta^n$, which  are the solutions to
        \begin{equation*} 
                \begin{array}{l}
                        \bar \beta^n - w \hat \sigma t( 1-\alpha \tilde s(\bar \beta^n) )=\hat \beta^n  \\
                        \underline \beta^n - w \hat \sigma t( \alpha (1-\tilde s(\underline \beta^n))) =\hat \beta^n.\\
                \end{array}
        \end{equation*}
        We first prove that $|C_{\tilde s}|-|C_{s}| = (\bar \beta^n - \bar \beta) + (\underline \beta - \underline \beta^n) \stackrel{p}{\rightarrow } 0$ as $n \to \infty$ for each fixed $\hat \beta$.
        We can write the upper endpoints as $\bar \beta^n = G_n(w \hat \sigma, w\tilde \sigma, \tilde \tau^2, \hat \beta^n)$, and $\bar \beta = G(w_0\sigma_\infty, w_0\sigma_\infty, \tau^2, \hat \beta)$, where $G$ and $G_n$ are continuous functions of their parameters. The functions $G$ and $G_n$ are different in that the former is based on $z$-quantiles, while the latter uses $t$-quantiles.
        We have
        \begin{equation}
        \begin{split}
                |\bar \beta^n - \bar \beta| &=|G_n(w \hat \sigma, w \tilde \sigma, \tilde \tau^2, \hat \beta^n) - G(w_0\sigma_\infty, w_0\sigma_\infty, \tau^2, \hat \beta)| \\ 
                & \leq |G_n(w \hat \sigma, w \tilde \sigma, \tilde \tau^2, \hat \beta^n) - G(w \hat \sigma, w \tilde \sigma, \tilde \tau^2, \hat \beta^n)|  \\ & + |G(w \hat \sigma, w \tilde \sigma, \tilde \tau^2, \hat \beta^n) - G(w_0 \sigma_\infty, w_0 \sigma_\infty, \tau^2, \hat \beta)|.        
                \label{twocomp}
        \end{split}  
        \end{equation}
        The second term converges to zero in probability because
        $(w \hat \sigma, w \tilde \sigma, \tilde \tau^2, \hat \beta^n) \stackrel{p}{\rightarrow } (w_0 \sigma_\infty,w_0 \sigma_\infty,\tau^2, \hat \beta)$.
        Elaborating on the convergence of the first term,
        note that $G_n$ is a monotone sequence of continuous functions:
        Given $G_{n1}$ and $G_{n2}$ where $n_2 > n_1$, and suppose the corresponding degrees-of-freedom of the $t$-quantiles are $q_1$ and $q_2$. We have $q_2 \geq q_1$, thus $t_{q_2}(1-\alpha \tilde s) \leq t_{q_1}(1- \alpha \tilde s)$. Hence $\hat \beta^n - w \hat \sigma t_{q_2}(1-\alpha \tilde s(\hat \beta^n)) \geq \hat \beta^n - w \hat \sigma t_{q_1}(1-\alpha \tilde s(\hat \beta^n))$. Therefore $G_{q_2}(w \hat \sigma^2, w \tilde \sigma^2, \tilde \tau^2, \hat \beta^n) \leq G_{q_1}(w \hat \sigma^2, w \tilde \sigma^2, \tilde \tau^2, \hat \beta^n)$, and so
        by Dini's theorem, $G_n \rightarrow  G$ uniformly on
        a compact set of $(s^2 \hat \sigma^2, s^2 \tilde \sigma^2, \tilde \tau^2, \hat \beta^n)$ values.
        Since
        $(s^2 \hat \sigma^2, s^2 \tilde \sigma^2, \tilde \tau^2, \hat \beta^n) \stackrel{p}{\rightarrow } (v^2\sigma_\infty^2, v^2\sigma_\infty^2, \tau^2, \hat \beta)$,
        for arbitrary $\epsilon > 0$ and $\delta >0$, there exists a number $N(\epsilon, \delta)$ such that when $n > N(\epsilon, \delta)$,
         $|w \hat \sigma - w_0\sigma_\infty| \leq \delta$, $|w \tilde \sigma - w_0\sigma_\infty| \leq \delta$, $|\tilde \tau^2 - \tau^2| \leq \delta $ and $|\hat \beta^n - \hat \beta| \leq \delta $ with probability at least $1-\epsilon$. Therefore, for arbitrary $\eta >0$,
         \begin{equation*}
         \lim_{n\to \infty} P(|G_n(w \hat \sigma, w \tilde \sigma, \tilde \tau^2, \hat \beta^n) - G(w \hat \sigma, w \tilde \sigma, \tilde \tau^2, \hat \beta)| < \eta ) > 1- \epsilon. 
         \end{equation*}
        Since $\epsilon$ is arbitrary, we conclude that the first
        term in (\ref{twocomp}) converges to zero in probability.

 Now we show the expected width converges to the oracle width by integrating over $\hat \beta$.
        This is done by first showing $| C_{\tilde s}|$
        is uniformly integrable and then applying Vitali's theorem.
        By the previous lemma we know that
        \begin{equation*}
                        |C_{\tilde s}|  < |\hat \beta^n| +  w \hat \sigma(  |t(\alpha/2)|+|t(1-\alpha/2)|).
                \label{bound2}
        \end{equation*}
        Note that $|t(\alpha/2)|+|t(1-\alpha/2)| <  |t_1(\alpha/2)|+|t_1(1-\alpha/2)| = c_1 < \infty$, where $t_1$ is the $t$-quantile with one degree of freedom. We now show $|C_{\tilde s} |$ is $L_2$-bounded. We have
        \begin{equation*}
                |C_{\tilde s}|^2 < |\hat \beta^n|^2 +  c_1^2 w^2 \hat \sigma^2 + 2 |\hat \beta^n| c_1 w \hat \sigma.
        \end{equation*}
        Here $E[|\hat \beta^n|^2] = \beta^2 + w^2\sigma^2$ and $E[w^2 \hat \sigma^2] =w^2 \sigma^2$. Since $w^2 \sigma^2 \to w_0^2\sigma_\infty^2 < \infty$, thus $E[|\hat \beta^n|^2]$ and $E[w^2 \hat \sigma^2]$ are both bounded for all $n$. Similarly, $E[w \hat \sigma]$ and $E[|\hat \beta^n|]$ are also bounded. Therefore, it is easy to see that $|C_{\tilde s}|$ is $L_2$-bounded, which implies that  $|C_{\tilde s}|$ is uniformly integrable. By Vitali's Theorem,
        $
        \lim_{n\rightarrow \infty} \Exp{ |  C_{\tilde s}|} =
        \Exp{ |  C_{s}|} = 0$.

\subsection*{Proof of Lemma 2} 
We first prove a consistency result for 
a notationally simpler model, and then discuss how the result
applies to the marginal model 
(\ref{eqn:mmod}). The simpler model is 
$\bl y \sim N_n(\bl 0 , \Lambda \tau^2 + \bl I \sigma^2)$ 
where $\Lambda$ is a known  diagonal matrix with positive entries. 
Let  $\theta_0 = (\tau^2_0,\sigma^2_0)$ be the true value
of 
$\theta=(\tau^2,\sigma^2)$, and 
let 
$q_{i}(\theta) =  y_i^2/(\lambda_i \tau^2 + \sigma^2) - 
 \log \frac{ \lambda_i \tau^2_0 + \sigma^2_0  }{ \lambda_i \tau^2 + \sigma^2}$, 
which is -2 times the contribution of $y_i$ to the log likelihood 
plus a constant. 
The MLE is therefore given by 
$\hat\theta = \arg \min  Q_n(\theta)$, where
$Q_n(\theta) = \sum_{i=1}^n q_i(\theta)/n$. 
\begin{lem}
Assume $\theta_0 \in \Theta$, a compact subset of 
$[0,\infty)\times (0,\infty)$, and let 
$\hat\theta_n = \arg \min_{\Theta} Q_n(\theta)$. 
For each $n$ assume that  $F_n$, 
the empirical distribution of  the diagonal entries of $\Lambda$, 
has support on $[0,\bar \lambda]$, where $\bar\lambda$ is 
fixed for all $n$. 
If 
$F_n$ converges weakly to a nondegenerate distribution $F_0$ as
 $n\rightarrow \infty$ 
then $\hat \theta \stackrel{p}{\rightarrow} \theta_0$ as 
$n\rightarrow \infty$. 
\end{lem}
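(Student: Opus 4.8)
The plan is a standard Wald-type consistency argument for an M-estimator. Write $h(\lambda;\theta) = (\lambda\tau_0^2+\sigma_0^2)/(\lambda\tau^2+\sigma^2)$, so that $\Exp{q_i(\theta)} = h(\lambda_i;\theta) - \log h(\lambda_i;\theta)$ and $\Exp{Q_n(\theta)} = \int [h(\lambda;\theta)-\log h(\lambda;\theta)]\,dF_n(\lambda)$. Compactness of $\Theta \subset [0,\infty)\times(0,\infty)$ lets me fix constants $0<\underline\sigma^2\le\bar\sigma^2<\infty$ and $\bar\tau^2<\infty$ with $\sigma^2\in[\underline\sigma^2,\bar\sigma^2]$ and $\tau^2\in[0,\bar\tau^2]$ on $\Theta$; together with $\lambda_i\in[0,\bar\lambda]$ this keeps $\lambda_i\tau^2+\sigma^2$ bounded away from $0$ and from above uniformly in $i$ and $\theta$, so $h(\lambda_i;\theta)$ stays in a fixed compact subset of $(0,\infty)$. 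This boundedness is what makes every moment and derivative estimate below uniform.

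First I would identify the limiting criterion and show it separates $\theta_0$. Since $\lambda\mapsto h(\lambda;\theta)-\log h(\lambda;\theta)$ is bounded and continuous on $[0,\bar\lambda]$ and $F_n\Rightarrow F_0$, for each $\theta$ one gets $\Exp{Q_n(\theta)}\to\bar Q(\theta):=\int[h(\lambda;\theta)-\log h(\lambda;\theta)]\,dF_0(\lambda)$, and $\bar Q$ is continuous (indeed Lipschitz) on $\Theta$. The elementary bound $x-\log x\ge 1$, with equality only at $x=1$, shows the integrand is $\ge 1$ everywhere and equals $1$ exactly where $\lambda\tau^2+\sigma^2=\lambda\tau_0^2+\sigma_0^2$; since this is a linear equation in $\lambda$ it holds for all $\lambda$ if $\theta=\theta_0$ and otherwise at most at one value of $\lambda$. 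Because $F_0$ is nondegenerate, a set of full $F_0$-measure cannot consist of a single $\lambda$, so $\bar Q(\theta)=1$ forces $\theta=\theta_0$ and $\bar Q(\theta)>1$ otherwise. Hence $\theta_0$ is the unique minimizer of $\bar Q$ over $\Theta$, and by continuity on the compact set it is well-separated: $\inf\{\bar Q(\theta):\|\theta-\theta_0\|\ge\epsilon\}>\bar Q(\theta_0)$ for every $\epsilon>0$.

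Next I would prove the uniform law of large numbers $\sup_{\theta\in\Theta}|Q_n(\theta)-\bar Q(\theta)|\stackrel{p}{\rightarrow}0$. Pointwise convergence follows from Chebyshev: $\Var{Q_n(\theta)}=n^{-2}\sum_i 2h(\lambda_i;\theta)^2\le C/n\to0$, so $Q_n(\theta)-\Exp{Q_n(\theta)}\stackrel{p}{\rightarrow}0$, and $\Exp{Q_n(\theta)}\to\bar Q(\theta)$ by the previous step. To upgrade to uniformity I would use stochastic equicontinuity: differentiating $q_i$ and using $\lambda_i\le\bar\lambda$ and $\lambda_i\tau^2+\sigma^2\ge\underline\sigma^2$ bounds $|\partial q_i/\partial\tau^2|$ and $|\partial q_i/\partial\sigma^2|$ on $\Theta$ by $C(1+y_i^2)$, so $|Q_n(\theta)-Q_n(\theta')|\le C\bigl(1+n^{-1}\sum_i y_i^2\bigr)\|\theta-\theta'\|$, and $n^{-1}\sum_i y_i^2=O_p(1)$ because its mean $n^{-1}\sum_i(\lambda_i\tau_0^2+\sigma_0^2)\le\bar\lambda\tau_0^2+\sigma_0^2$ is bounded. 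Thus $\{Q_n\}$ is Lipschitz with a common tight constant; combining this with pointwise convergence on a countable dense subset of the compact set $\Theta$ gives the uniform convergence by the usual covering argument. Finally, with $\hat\theta_n$ minimizing $Q_n$ over $\Theta$, $\bar Q$ continuous with a well-separated minimum at $\theta_0$, and this uniform convergence, the standard argmin consistency theorem gives $\hat\theta_n\stackrel{p}{\rightarrow}\theta_0$.

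I expect the uniform LLN to be the main obstacle: the $q_i(\theta)$ form a triangular array of independent but non-identically distributed terms whose means drift with $F_n$, so neither the pointwise limit nor the passage to uniformity is immediate. All the leverage comes from compactness of $\Theta$, which bounds $\tau^2$ and keeps $\sigma^2$ away from $0$, and from $\lambda_i\le\bar\lambda$; these yield the gradient bound $C(1+y_i^2)$ that is uniform in $i$ and $\theta$, and since $\Exp{y_i^2}$ is bounded uniformly in $i$, averaging it produces the common tight Lipschitz constant that drives everything. (The rest of Lemma 2 --- that the marginal model (\ref{eqn:mmod}) can be rotated to this diagonal form, with $\Lambda$ the eigenvalue matrix of $\bl X_2\bl X_2^\top$, and that condition B3 supplies the bounded-support and weak-convergence hypotheses on $F_n$ --- is then a matter of matching notation.)
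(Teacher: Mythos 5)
Your proof is correct and shares the paper's overall architecture (pointwise convergence by Chebyshev, a Lipschitz bound with an $O_p(1)$ constant to get a uniform law of large numbers, then the standard argmin consistency theorem on the compact $\Theta$), but your identification step is genuinely different from the paper's. The paper shows $Q_0$ is uniquely minimized at $\theta_0$ by computing the gradient, reducing critical points to a $2\times 2$ linear system whose matrix has determinant $m_2-m_1^2>0$ under a tilted measure when $F_0$ is nondegenerate, and then checking the Hessian at $\theta_0$; you instead use the pointwise inequality $x-\log x\ge 1$ with equality only at $x=1$, observe that $h(\lambda;\theta)=1$ is linear in $\lambda$ and so can hold on at most one point unless $\theta=\theta_0$, and invoke nondegeneracy of $F_0$. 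Your route is more elementary and arguably stronger: it directly yields that $\theta_0$ is the \emph{global}, well-separated minimizer of $\bar Q$ over $\Theta$, whereas the paper's argument (unique interior critical point that is a local minimum) strictly speaking still needs the boundary of the compact $\Theta$ to be handled before concluding global minimization. On the uniformity side you and the paper use the same Lipschitz envelope $C(1+y_i^2)$ with bounded average expectation; the paper packages the covering argument as Andrews' (1992) theorem while you carry it out by hand, which is a presentational rather than substantive difference. Your closing remark about rotating the marginal model to diagonal form and invoking B3 matches the paper's reduction as well.
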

We prove consistency of $\hat \theta$ in three steps: 
First, we show that
$Q_n(\theta) - \Exp{Q_n(\theta)}$ converges uniformly to  
zero as $n\rightarrow \infty$. Second, we show that this implies that 
as a function of $\theta\in \Theta$, 
$Q_n$ converges uniformly to a function 
$Q_0$. Third, we show that $Q_0$ is uniquely minimized at  
$\theta_0$.  Consistency of $\hat \theta$ follows from these latter two results
(see, for example, Theorem 2.1 of \citet{newey_mcfadden_1994}). 

For each $n$ the expectation of $Q_n(\theta) - \Exp{ Q_n(\theta)}$ is 
zero and the variance is given by 
\[
  \Var{ Q_n(\theta) } = 
  \frac{2}{n} \left ( \frac{1}{n} \sum_{i=1}^n  (  \frac{\lambda_i\tau^2_0 + \sigma^2_0 }{ \lambda_i \tau^2 + \sigma^2 } )^2 \right )  = 
   \frac{2}{n} \left (  \int  (  \frac{\lambda\tau^2_0 + \sigma^2_0 }{ \lambda \tau^2 + \sigma^2 }  )^2 \, dF_n(\lambda)   \right ). 
\]
The integrand is a bounded function of $\lambda$ 
on any bounded set $[0,\bar \lambda] $ as long as 
$\sigma^2>0$. 
Therefore, if $F_n$ converges weakly to a distribution $F_0$
with such bounded support, then
for all  $\theta=(\tau^2,\sigma^2) \in [0,\infty)\times (0,\infty)$
the integral in the parentheses converges to a finite limit and
the variance of $Q_n(\theta) - \Exp{ Q_n(\theta)}$ converges to zero.
Thus $Q_n(\theta) - \Exp{ Q_n(\theta)} \stackrel{p}{\rightarrow } 0 $
as $n\rightarrow \infty$. To show that this convergence is uniform 
we use Theorem 3 of \citet{andrews_1992}. 
Using basic calculus, it can be shown that the functions $q_i(\theta)$ 
satisfy the Lipschitz condition 
$
 |q_i(\theta ) - q_i(\theta') | < 
g(y_i,\lambda_i) \times  || \theta -\theta'||  
$
where $g(y_i,\lambda_i) = (1+y_i^2/\epsilon) (\lambda_i+ 1)/\epsilon$ 
and $\epsilon=\min_\Theta \sigma^2$. 
Andrews' 
Theorem 5 says that if (i) $\Theta$ is compact,  (ii) $Q_n\stackrel{p}{\rightarrow}0$
for each $\theta\in \Theta$,  and  (iii)
$\sup_{n\geq 1} \sum \Exp{ g(y_i,\lambda_i) } /n  <\infty$, 
then $\sup_\Theta | Q_n(\theta) |\stackrel{p}{\rightarrow} 0$. 
This last condition holds in this case because 
$\Exp{ g(y_i,\lambda_i)}$ is quadratic in $\lambda_i$, the values of 
which are all bounded by assumption. Therefore, 
$Q_n(\theta)$ converges uniformly in probability  to zero as $n\rightarrow 
\infty$. 

Now consider the limiting value of $\Exp{ Q_n(\theta)}$. We have
\begin{align*}
\lim_{n\rightarrow \infty} \Exp{ Q_n(\theta)}  & =
\lim_{n\rightarrow \infty} \int 
  \left ( 
\frac{\lambda \tau_0^2 + \sigma_0^2}{\lambda\tau^2 + \sigma^2} -
\log \frac{\lambda \tau_0^2 + \sigma_0^2}{\lambda\tau^2 + \sigma^2} \right) 
 \, dF_n(\lambda)  \\ 
 &  = 
  \int 
  \left ( 
\frac{\lambda \tau_0^2 + \sigma_0^2}{\lambda\tau^2 + \sigma^2} -
\log \frac{\lambda \tau_0^2 + \sigma_0^2}{\lambda\tau^2 + \sigma^2} \right) 
 \, dF_0(\lambda) \equiv Q_0(\theta), 
\end{align*}
for all values of $\theta$ for which the integrand is a bounded
function of $\lambda$. On $\lambda\in [0,\bar\lambda]$,  the ratio
$(\lambda\tau^2_0 + \sigma_0^2)/(\lambda \tau^2 + \sigma^2)$ is
bounded between $\min\{ \tfrac{ \bar \lambda \tau^2_0 + \sigma_0^2}{ \bar \lambda \tau^2 + \sigma^2}, \tfrac{\sigma^2_0}{\sigma^2}\}$
and 
$\max\{ \tfrac{ \bar \lambda \tau^2_0 + \sigma_0^2}{ \bar \lambda \tau^2 + \sigma^2}, \tfrac{\sigma^2_0}{\sigma^2}\}$, 
and so the 
integrand is bounded in $\lambda\in [0,\bar \lambda]$ for all
$(\tau^2,\sigma^2) \in [0,\infty)\times (0,\infty)$.  
Furthermore, the convergence of $\Exp{Q_n(\theta)} $ to $Q_0(\theta)$ 
is uniform, since the integrand is bounded as a function of $\theta$ 
on the compact set $\Theta$ (\citet{rangarao_1962}). 
Together with the uniform convergence in probability of $Q_n(\theta)- \Exp{Q_n(\theta)}$ to zero, this implies uniform convergence in probability of $Q_n$ to $Q_0$. 

Finally we show that $Q_0(\theta)$ has a unique minimizing 
value at $\theta_0$. 
After computing the gradient of $Q_0(\theta)$, it is easily shown 
that a critical point $(\tau^2,\sigma^2)$ must satisfy 
\begin{align*}
 \int \lambda^k \frac{ \lambda \tau^2 + \sigma^2 }{(\lambda \tau^2 +\sigma^2)^2}  \, dF_0(\lambda)  = 
 \int \lambda^k \frac{ \lambda \tau^2_0 + \sigma^2_0 }{(\lambda \tau^2 +\sigma^2)^2}  \, dF_0(\lambda) 
\end{align*}
for $k\in\{0,1\}$. 
Rearranging, it can be shown that a critical point satisfies
\begin{equation}
\begin{pmatrix}
m_1 & 1 \\
m_2 & m_1 
\end{pmatrix}
\begin{pmatrix} \tau^2 \\ \sigma^2 \end{pmatrix} =
\begin{pmatrix} 
m_1 & 1 \\
m_2 & m_1 
\end{pmatrix}
\begin{pmatrix} \tau^2_0 \\ \sigma^2_0 \end{pmatrix}  
\label{eqn:critpoint} 
\end{equation} 
where $m_k(\tau^2,\sigma^2)$, $k\in \{1,2\}$ is given by 
\[ m_k(\tau^2,\sigma^2) =
\frac{ \int  \lambda^k  (\lambda \tau^2 +\sigma^2)^{-2} \, dF_0(\lambda) }
     { \int    (\lambda \tau^2 + \sigma^2 )^{-2} \, dF_0(\lambda) }.
\]
For each $(\tau^2,\sigma^2)\in [0,1)\times (0,\infty)$, 
$m_1$ and $m_2$ are the first and second moments of $\lambda$ 
under a probability measure having  density with respect to 
$F_0$ proportional to $(\lambda\tau^2 + \sigma^2)^{-2}$. 
If $F_0$ is not degenerate, then 
$m_2>m_1^2$, and so the determinant of the  matrix 
 in (\ref{eqn:critpoint})  is non-zero. 
Therefore, 
the matrix is invertible and  
$(\tau^2_0,\sigma_0^2)$  is the only solution to (\ref{eqn:critpoint}). 
The matrix of second derivatives of $Q_0$ is given by 
\[
  \frac{\partial^2 Q_0}{ \partial \theta \partial \theta^\top } 
 = \int \begin{pmatrix} \lambda^2 & \lambda \\ \lambda & 1 \end{pmatrix} 
    \frac{2(\lambda \tau^2_0 + \sigma^2_0) - (\lambda \tau^2 + \sigma^2)}
  { \lambda \tau^2+ \sigma^2 }  \times (\lambda \tau^2 + \sigma^2)^{-2} 
  \, dF_0(\lambda) . 
\]
At the critical point $(\tau^2_0,\sigma^2_0)$ this simplifies to 
 the expectation of the matrix in the integrand with respect to 
the probability measure with density 
proportional to $(\lambda\tau^2 + \sigma^2)^{-2}$ with respect to 
$F_0$. Again, if $F_0$ is not degenerate then the expectation of 
this matrix, and hence the Hessian of $Q_0$, is strictly positive definite.  
The critical point is a local minimum, and since it is the only critical 
point of the continuous function $Q_0$, it is the unique minimizer. 
This completes the proof of Lemma 4. 

To see how this applies to the properties of the empirical Bayes 
estimates 
$(\tilde\tau^2,\tilde \sigma^2)$ of $(\tau^2,\sigma^2)$ based on the 
marginal model (\ref{eqn:mmod}), 
let $\bl U$ be the $(p-1)\times (p-1)$ matrix of 
left singular vectors of $\bl X_2$, and let
 $n\bs\Lambda_2$ be the diagonal matrix of the squared singular values. 
Then $\bl U_2^\top \bl z_2/\sqrt{n} \sim N_{p-1}(\bl 0 , \bs\Lambda_2 
\tau^2 +\bl I \sigma_\infty^2 )$, and so the properties of the MLE 
of $(\tau^2,\sigma^2_\infty)$ based on $\bl z_2$ will  be the 
same as those of $(\tau^2,\sigma^2)$ in Lemma 4 if $\bs\Lambda_2$
satisfies the assumption of the Lemma. To see that it does, 
recall that the assumption of Lemma 2 was that 
the empirical distribution of the eigenvalues of $\bl X^\top \bl X/n$ 
is uniformly bounded and converges
weakly to a non-degenerate distribution with finite support. 
For a given $n$, let $\gamma_1,\ldots, \gamma_p$ be the eigenvalues of 
$\bl X^\top\bl X /n$. Since $\bl X_2^\top\bl X_2/n$ is a compression 
of $\bl X^\top \bl X/n$,
by the Cauchy interlacing theorem we have 
$\gamma_1\leq  \lambda_1 \leq \gamma_2 \leq  \cdots  \leq  \gamma_{p-1} \leq
 \lambda_{p-1} \leq \gamma_p$. 
Therefore, if the values of $\{\gamma_1,\ldots, \gamma_p\}$ 
are bounded uniformly in $p$ and 
have an empirical distribution that converges to a nondegenerate limit, 
then the same properties hold for the 
values of $\{\lambda_1,\ldots, \lambda_{p-1}\}$.


\bibliographystyle{chicago}
\bibliography{\string~/Dropbox/Common/refs}

\end{document}